\newtheorem{theorem}{Theorem}
\newtheorem{definition}{Definition}
\newtheorem{lemma}[theorem]{Lemma}
\newtheorem{proposition}[theorem]{Proposition}
\newtheorem{example}{Example}
\newtheorem{claim}{Claim}
\newtheorem*{remark*}{Remarks}
\def\CI{\text{CI}}
\def\CO{\text{CO}}
\def\SK{\text{SK}}
\def\cB{\mathcal{B}}
\def\cE{\mathcal{E}}
\def\cF{\mathcal{F}}
\def\cM{\mathcal{M}}
\def\cP{\mathcal{P}}
\def\cS{\mathcal{S}}
\def\cV{\mathcal{V}}
\def\cX{\mathcal{X}}
\def\BL{\textbf{L}}
\def\cH{\mathcal{H}}
\def\N{\mathbb{N}}
\title{The Communication Complexity of Achieving \\ SK Capacity in a Class of PIN Models}
\author{
\IEEEauthorblockN{Manuj Mukherjee$^\dag$} \and \IEEEauthorblockN{Navin Kashyap$^\dag$}
}
\begin{document}

\maketitle

\renewcommand{\thefootnote}{}
\footnotetext{
\noindent $^\dag$M.\ Mukherjee and N.\ Kashyap are with the 
Department of Electrical Communication Engineering, 
Indian Institute of Science, Bangalore. Email: \{manuj,nkashyap\}@ece.iisc.ernet.in.

\smallskip
}

\begin{abstract}
The communication complexity of achieving secret key (SK) capacity in the multiterminal source model of Csisz{\'a}r and Narayan is the minimum rate of public communication required to generate a maximal-rate SK. It is well known that the minimum rate of communication for omniscience, denoted by $R_{\CO}$, is an upper bound on the communication complexity, denoted by $R_{\SK}$. A source model for which this upper bound is tight is called $R_{\SK}$-maximal. In this paper, we establish a sufficient condition for $R_{\SK}$-maximality within the class of pairwise independent network (PIN) models defined on hypergraphs. This allows us to compute $R_{\SK}$ exactly within the class of PIN models satisfying this condition. On the other hand, we also provide a counterexample that shows that our condition does not in general guarantee $R_{\SK}$-maximality for sources beyond PIN models.
\end{abstract}

\renewcommand{\thefootnote}{\arabic{footnote}}

\section{Introduction} \label{sec:intro}
Csisz{\'a}r and Narayan \cite{CN04} introduced the problem of secret key (SK) generation within the multiterminal i.i.d.\ source model. In this model, there are multiple terminals, each of which observes a distinct component of a source of correlated randomness. The goal is for the terminals to agree on a shared SK via communication over an insecure noiseless public channel. The SK is to be secured from passive eavesdroppers with access to the public channel. The maximum rate of such an SK, i.e. the \emph{SK capacity}, was characterized in \cite{CN04}, and a protocol for attaining SK capacity was given, which involved communication for \emph{omniscience}, i.e., all terminals recovering the entire information of all the other terminals. However, it was pointed out (see remark following Theorem 1 in \cite{CN04}) that omniscience is not always necessary for achieving SK capacity. A question that naturally arises is the following (see \cite[Section VI]{CN04} and \cite[Section V]{NN10}): what is the minimum rate of public communication required to achieve SK capacity? We call this minimum rate of public communication the \emph{communication complexity}\footnote{Our use of ``communication complexity'' differs from the use prevalent in the theoretical computer science literature where, following \cite{Yao79}, it refers to the total amount of communication, in bits, required to perform some distributed computation.} of achieving SK capacity, and denote it by $R_{\SK}$. The protocol from \cite{CN04} shows that $R_{\SK}$ is upper bounded by the minimum rate of public communication required for omniscience, denoted by $R_{\CO}$. We refer to sources for which this upper bound is tight as \emph{$R_{\SK}$-maximal}.

There have been a few attempts at characterizing $R_{\SK}$. In \cite[Theorem 3]{Tyagi13} Tyagi has completely characterized the communication complexity for two terminals in terms of an \emph{interactive common information}, a type of Wyner common information \cite{Wyner75}. Our previous work \cite{MK14} involved extension of Tyagi's results to the case of $m>2$ terminals. Specifically, we gave a lower bound \cite[Theorem~2]{MK14} on the communication complexity using a multiterminal variant of Tyagi's interactive common information. We were able to evaluate this lower bound only in the very special case of a complete graph pairwise independent network (PIN) model in which we additionally imposed linearity restrictions on the public communication allowed \cite[Theorem~6]{MK14}.

A different approach to analyzing $R_{\SK}$ can be found in \cite{CHISIT14},\cite{CH14}. These follow up on the work in \cite{CW14}, which studied \emph{one-shot} SK generation (i.e., each component of the source just gives out one symbol instead of a sequence of i.i.d.\ symbols) in a hypergraph PIN model, and evaluated the corresponding one-shot SK capacity \cite[Theorem 6]{CW14}. This result also used communication for omniscience for attaining the one-shot SK capacity, but did not address the issue of communication complexity. This isssue was addressed in the subsequent work \cite{CH14}, which characterized the communication complexity of achieving one-shot SK capacity under linearity restrictions on the communication. The characterization was in terms of ``minimum connected dominating edge sets'' of hypergraphs \cite[Theorem~11]{CH14}. While the general problem of determining the unrestricted communication complexity was left open, it was shown that removing the linearity restriction can strictly reduce the communication complexity in some cases \cite[Theorem 4]{CH14}.

The main contribution of this work is the identification of a sufficient condition under which a certain class of hypergraph PIN models (of which the simple graph PIN models of \cite{NN10} form a subclass) can be shown to be $R_{\SK}$-maximal. Thus, for this class, we have $R_{\SK} = R_{\CO}$, and the latter can be explicitly computed in terms of the parameters of the underlying hypergraph. This yields the first explicit computation of the (unrestricted) communication complexity $R_{\SK}$ for a multiterminal source model with more than two terminals.  This greatly extends our earlier results from \cite{MK14}, and also, in a sense, partially extends the one-shot results of \cite{CH14} to the i.i.d.\ source sequence model. However, it is also shown via a counterexample that our condition does not guarantee $R_{\SK}$-maximality for sources beyond the PIN model.

The rest of the paper is structured as follows. Section~\ref{sec:prelim} presents the required definitions and notation. Section~\ref{sec:uh} identifies a class of hypergraph PIN models which are $R_{\SK}$-maximal. Section~\ref{sec:counterex} shows using a counterexample that the results of Section~\ref{sec:uh} do not extend to a general multiterminal setting. The paper concludes with some remarks in Section~\ref{sec:conc}.

\section{Preliminaries} \label{sec:prelim}
We will follow the notation and description of \cite{MK14}. Throughout, we use $\N$ to denote the set of positive integers. Consider a set of $m\geq2$ terminals denoted by $\mathcal{M}=\{1,2, \ldots, m\}$. Each terminal $i \in \mathcal{M}$ observes $n$ i.i.d.\ repetitions of a random variable $X_i$ taking values in a finite set $\mathcal{X}_i$. The $n$ i.i.d.\ copies of the random variable are denoted by $X_i^n$. The random variables $X_1,X_2,\ldots,X_m$ need not be independent.
For any subset $A\subseteq \mathcal{M}$, $X_A$ and $X_A^n$ denote the collections of random variables $(X_i:i \in A)$ and $(X_i^n: i \in A)$, respectively. The terminals communicate through a noiseless public channel, 
any communication sent through which is accessible to all terminals and to potential eavesdroppers as well.
An \emph{interactive communication} is a communication $\textbf{f}=(f_1,f_2,\cdots,f_r)$ with finitely many transmissions $f_j$, in which any transmission sent by the $i$th terminal is a deterministic function of $X_i^n$ and all the previous communication, i.e., if terminal $i$ transmits $f_j$, then $f_j$ is a function only of $X_i^n$ and $f_1,\ldots,f_{j-1}$.   We denote the random variable associated with $\textbf{f}$ by $\textbf{F}$; the support of $\textbf{F}$ is a finite set $\cF$. The rate of the communication $\textbf{F}$ is defined as $\frac{1}{n}\log|\cF|$. Note that $\textbf{f}$, $\textbf{F}$ and $\cF$ implicitly depend on $n$.

\begin{definition}
\label{def:CR}
A \emph{common randomness (CR)} obtained from an interactive communication $\textbf{F}$ is a sequence of random variables $\textbf{J}^{(n)}$, $n\in\N$, which are functions of $X_{\mathcal{M}}^n$, such that for any $0<\epsilon<1$ and for all sufficiently large $n$, there exist $J_i=J_i(X_i^n,\textbf{F})$, $i = 1,2,\ldots,m$, satisfying $\text{Pr}[J_1=J_2=\cdots=J_m=\textbf{J}^{(n)}] \geq 1-\epsilon$.
\end{definition}

When $\textbf{J}^{(n)}=X_{\cM}^n$ we say that the terminals in $\cM$ have attained \emph{omniscience}. The communication $\textbf{F}$ which achieves this is called a \emph{communication for omniscience}. We denote the minimum rate of communication for omniscience by $R_{\CO}$.

\begin{definition}
\label{def:SK}
A real number $R\geq 0$ is an \emph{achievable SK rate} if there exists a CR $\textbf{K}^{(n)}$, $n \in \N$, obtained from an interactive communication $\textbf{F}$ satisfying, for any $\epsilon > 0$ and for all sufficiently large $n$, $I(\textbf{K}^{(n)};\textbf{F})\leq \epsilon$ and $\frac{1}{n}H(\textbf{K}^{(n)}) \geq R-\epsilon$. The \emph{SK capacity} is defined to be the supremum among all achievable rates.  The CR $\textbf{K}^{(n)}$ is called a \emph{secret key (SK)}. 
\end{definition}

From now on, we will drop the superscript $(n)$ from both $\textbf{J}^{(n)}$ and $\textbf{K}^{(n)}$ to keep the notation simple. 

The SK capacity can be expressed as \cite[Section~V]{CN04}, \cite{CZ10}
\begin{equation}
\textbf{I}(X_{\mathcal{M}}) \triangleq H(X_{\mathcal{M}})-\max_{\lambda \in \Lambda} \sum_{B \in \mathcal{B}} \lambda_B H(X_B| X_{B^c}) \label{skcapacity}
\end{equation}
where $\mathcal{B}$ is the set of non-empty, proper subsets of $\mathcal{M}$, and $\lambda=(\lambda_B: B\in \mathcal{B})\in \Lambda$ iff $\lambda_B\geq 0$ for all $B\in \mathcal{B}$ and for all $i\in \mathcal{M}$, $\sum_{B:i\in B}\lambda_B=1$. It is a fact that $\textbf{I}(X_{\mathcal{M}}) \ge 0$ \cite[Proposition~II]{MT10}. Other equivalent characterizations of $\textbf{I}(X_{\cM})$ exist in literature. Theorem 1 of \cite{CN04} shows that
\begin{equation}
\textbf{I}(X_{\cM})=H(X_{\cM})-R_{\CO}.\label{omni}
\end{equation}
Theorem 1.1 of \cite{CZ10} and Theorem 2.1 of \cite{Chan14} provides yet another characterization of $\textbf{I}(X_{\cM})$. Define $\Delta(\cP)\triangleq \frac{1}{|\cP|-1} \left[\sum_{A \in \cP} H(X_A) - H(X_{\cM}) \right]$. Then,
\begin{equation}
\textbf{I}(X_{\cM})=\min_{\cP}\Delta(\cP) 
\label{eq:I}
\end{equation}
the minimum being taken over all partitions $\cP=\{A_1,A_2,\cdots,A_{\ell}\}$ of $\cM$, of size $\ell \ge 2$. The partition $\bigl\{\{1\},\{2\},\ldots,\{m\}\bigr\}$ consisting of $m$ singleton cells will play a special role in the later sections of this paper; we call this the \emph{singleton partition} and denote it by $\cS$. The sources where \emph{$\cS$ is a minimizer for \eqref{eq:I}} will henceforth be refered to as \emph{Type $\cS$ sources}. The following proposition from \cite{MKS14} gives us an algorithm to verify whether a source is Type $\cS$. For any $B\subsetneq\cM$ with $B=\{b_1,b_2,\cdots,b_{|B|}\}$ denote by $\cP_B$ the partition $\cP_B=\{\{b_1\},\{b_2\},\cdots,\{b_{|B|}\},B^c\}$. Then we have
\begin{proposition}
\cite[Proposition 7]{MKS14}
For $m \ge 3$, let $\Omega = \{B \subset [m]: 1 \le |B| \, \le m-2\}$. The singleton partition $\cS$ is \\
\emph{(a)} a minimizer for $\mathbf{I}(X_{[m]})$ iff $\Delta(\cS) \le \Delta(\cP_B)$ $\forall\,B \in \Omega$; \\
\emph{(b)} the unique minimizer for $\mathbf{I}(X_{[m]})$ iff $\Delta(\cS) < \Delta(\cP_B)$ $\forall\,B \in \Omega$.
\label{prop:min}
\end{proposition}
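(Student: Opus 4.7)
The forward directions of both (a) and (b) are immediate from the definition of (unique) minimizer: each $\cP_B$ with $B \in \Omega$ is a partition of $\cM$ of size at least $2$, and since $|B^c| \ge 2$ it differs from $\cS$, so whenever $\cS$ is a (unique) minimizer we automatically get $\Delta(\cS) \le \Delta(\cP_B)$ (respectively, $<$). The substantive content lies in the converse: we must upgrade inequalities against the restricted family $\{\cP_B : B \in \Omega\}$ to inequalities against an arbitrary partition of $\cM$ of size at least $2$.

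The plan is to reformulate the hypothesis into a per-block inequality that can be summed over the cells of an arbitrary partition. Direct manipulation of $\Delta(\cS) \le \Delta(\cP_B)$ with the substitution $A = B^c$ gives the equivalent statement
\begin{equation}
H(X_A) \ge H(X_\cM) + (m - |A|)\,\Delta(\cS) - \sum_{b \notin A} H(X_b), \label{eq:key}
\end{equation}
valid for every $A \subseteq \cM$ with $2 \le |A| \le m-1$. The next observation is that \eqref{eq:key} actually extends with equality to the boundary sizes $|A|=1$ and $|A|=m$; both cases follow by direct substitution using $(m-1)\,\Delta(\cS) = \sum_{i \in \cM} H(X_i) - H(X_\cM)$. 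Thus \eqref{eq:key} holds for every non-empty $A \subseteq \cM$.

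Given an arbitrary partition $\cP = \{A_1, \ldots, A_\ell\}$ of $\cM$ with $\ell \ge 2$, I would sum \eqref{eq:key} over $j = 1, \ldots, \ell$. The bookkeeping is clean: $\sum_j (m - |A_j|) = m(\ell-1)$, while $\sum_j \sum_{b \notin A_j} H(X_b) = (\ell-1) \sum_{i \in \cM} H(X_i)$, because each $i \in \cM$ belongs to exactly one block and is therefore omitted from exactly $\ell - 1$ of the inner sums. Substituting these identities and applying the expression for $\Delta(\cS)$ once more, all the $\sum_i H(X_i)$ and $H(X_\cM)$ terms combine to yield $\sum_j H(X_{A_j}) - H(X_\cM) \ge (\ell-1)\,\Delta(\cS)$, which is exactly $\Delta(\cP) \ge \Delta(\cS)$ and establishes (a). For (b), if $\cP \ne \cS$ then some block $A_j$ has $2 \le |A_j| \le m-1$, the corresponding instance of \eqref{eq:key} is strict by hypothesis, and the summed inequality is therefore strict.

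The main obstacle is conceptual rather than computational: one has to identify the correct per-block reformulation \eqref{eq:key} and notice that the boundary cases $|A| \in \{1, m\}$ hold automatically with equality. Once this observation is in place, partitions containing arbitrary mixtures of singleton and non-singleton cells can all be handled uniformly by a single summation, without any induction on the number of non-singleton blocks.
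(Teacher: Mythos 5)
Your proof is correct and complete. Note that the paper itself gives no proof of this proposition -- it is imported verbatim as Proposition 7 of \cite{MKS14} -- so there is nothing here to compare against line by line; judged on its own terms, your argument works. The forward directions are indeed immediate once you observe that $|B^c| = m - |B| \ge 2$ forces $\cP_B \ne \cS$. For the converse, your per-block inequality $H(X_A) \ge H(X_\cM) + (m-|A|)\Delta(\cS) - \sum_{b \notin A} H(X_b)$ is exactly the right reformulation: it is equivalent to $\Delta(\cS) \le \Delta(\cP_{A^c})$ for $2 \le |A| \le m-1$, holds with equality at $|A| \in \{1, m\}$ via $(m-1)\Delta(\cS) = \sum_i H(X_i) - H(X_\cM)$, and the bookkeeping in the summation over an arbitrary partition ($\sum_j (m - |A_j|) = m(\ell-1)$ and $\sum_j \sum_{b \notin A_j} H(X_b) = (\ell-1)\sum_i H(X_i)$) checks out, yielding $\Delta(\cP) \ge \Delta(\cS)$. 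The strictness argument for part (b) is also sound, since any $\cP \ne \cS$ of size $\ell \ge 2$ has a block $A_j$ with $2 \le |A_j| \le m-1$, making that summand strict.
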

A better (strongly polynomial-time) algorithm to calculate the minimizing partition of \eqref{eq:I} has been described in \cite{Chan14}. However, Proposition~\ref{prop:min} above is more suited for the purposes of this paper.

We are now in a position to make the notion of communication complexity rigorous.

\begin{definition}
\label{def:RSKr}
A real number $R\geq 0$ is said to be an \emph{achievable rate of interactive communication for maximal-rate SK} if for all $\epsilon > 0$ and for all sufficiently large $n$, there exist \emph{(i)}~an interactive communication $\textbf{F}$ satisfying $\frac{1}{n}\log|\cF| \; \leq R+\epsilon$, and \emph{(ii)}~an SK $\textbf{K}$ obtained from $\textbf{F}$ such that $\frac{1}{n}H(\textbf{K})\geq \textbf{I}(X_{\mathcal{M}})-\epsilon$.

We denote the infimum among all such achievable rates by $R_{\SK}$.
\end{definition}

The proof of Theorem~1 in \cite{CN04} shows that there exists an interactive communication $\textbf{F}$ that enables omniscience at all terminals and from which a maximal-rate SK can be obtained. Therefore, we have $R_{\SK}\leq R_{\CO}< \infty$.

In \cite{MK14} the communication complexity was lower bounded using extensions of proof techniques developed in \cite{Tyagi13}. The lower bound involves a quantity called the interactive common information rate, a special case of the Wyner common information rate \cite{Wyner75} extended to a multiterminal setting. We will now define formally what these quantities are. In order to do so we need the following extension of the definition of $\textbf{I}(X_{\mathcal{M}})$ given in \eqref{skcapacity}: for any random variable $\textbf{L}$, and any $n \in \N$, we define
\begin{equation}
\textbf{I}(X_{\mathcal{M}}^n | \textbf{L}) \triangleq \max_{\lambda\in\Lambda^*} \left[H(X_{\mathcal{M}}^n | \textbf{L})-\sum_{B \in \mathcal{B}} \lambda_B H(X_B^n|X_{B^c}^n,\textbf{L})\right], \label{cmi}
\end{equation}
where $\Lambda^* \subset \Lambda$ is the set constituting of optimal $\lambda \in \Lambda$ for the linear program in the definition of $\textbf{I}(X_{\mathcal{M}})$ in \eqref{skcapacity}.\footnote{The maximization carried out in \eqref{cmi} was not originally present in \cite{MK14}. The maximization has been brought in here to make the quantity $\textbf{I}(X_{\mathcal{M}}|\BL)$ well defined. It can be easily seen that under this modified definition the results of \cite{MK14} are still valid.} It follows from Proposition~II in \cite{MT10} that $\textbf{I}(X_{\mathcal{M}}^n | \textbf{L}) \ge 0$. Also, note that $\textbf{I}(X_{\cM}^n) = n\textbf{I}(X_{\cM})$. 

\begin{definition}
\label{def:CI}
A \emph{(multiterminal) Wyner common information ($\CI_W$)} for $X_{\mathcal{M}}$ is a sequence of finite-valued functions $\textbf{L}^{(n)}=\textbf{L}^{(n)}(X_{\mathcal{M}}^n)$ such that $\frac{1}{n}\textbf{I}(X_{\mathcal{M}}^n|\textbf{L}^{(n)}) \to 0$ as $n \to \infty$. An \emph{interactive common information ($\CI$)} for $X_{\mathcal{M}}$ is a Wyner common information of the form $\textbf{L}^{(n)} = (\textbf{J},\textbf{F})$, where $\textbf{F}$ is an interactive communication and $\textbf{J}$ is a CR obtained from $\textbf{F}$. 
\end{definition}

Again, we shall drop the superscript $(n)$ from $\textbf{L}^{(n)}$ for notational simplicity. Wyner common informations $\textbf{L}$ do exist: for example, the identity map $\textbf{L}=X_{\mathcal{M}}^n$ is a $\CI_W$. 
To see that $\CI$s $(\textbf{J},\textbf{F})$ also exist, observe that $\textbf{J}=X_{\mathcal{M}}^n$ and a communication $\textbf{F}$ enabling omniscience constitute a $\CI_W$, and hence, a $\CI$.

\begin{definition}
\label{def:CIrate}
A real number $R\geq 0$ is an \emph{achievable $\CI_W$ (resp.\ $\CI$) rate} if there exists a $\CI_W$ $\textbf{L}$ (resp.\ a $\CI$ $\textbf{L} = (\textbf{J},\textbf{F})$) such that for all $\epsilon > 0$, we have
$\frac{1}{n}H(\textbf{L})\leq R+\epsilon$ for all sufficiently large $n$. 

We denote the infimum among all achievable $\CI_W$ (resp.\ $\CI$) rates by $\CI_W(X_{\mathcal{M}})$ (resp. \ $\CI(X_{\mathcal{M}})$). 
\end{definition}

To ensure that $\CI(X_{\cM})<\infty$, existence of a $(\textbf{J},\textbf{F})$ pair which is a $\CI_W$ is needed. Such a pair indeed exists, as the proof of \cite[Theorem 1]{CN04} shows that there exists an interactive communication $\textbf{F}$ from which a CR $\textbf{J}=X_{\cM}^n$ is obtained, with $\BL=(\textbf{J},\textbf{F})$ being a $\CI_W$, as discussed after Definition~\ref{def:CI}.

The proposition below records the relationships between some of the information-theoretic quantities defined so far. 

\begin{proposition} \cite[Proposition 1]{MK14} For any source $X_{\cM}^n$, we have 
$H(X_{\mathcal{M}}) \ge \CI(X_{\mathcal{M}})\geq \CI_W(X_{\mathcal{M}})\geq \textbf{I}(X_{\mathcal{M}})$.
\label{prop:ineqs}
\end{proposition}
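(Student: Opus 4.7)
The plan is to prove the three inequalities separately, working from the outside in: the outer two are essentially by construction/by definition, while the middle one requires a substantive calculation.

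For the first inequality $H(X_{\mathcal{M}}) \ge \CI(X_{\mathcal{M}})$, I would exhibit an explicit $\CI$ of rate $H(X_{\mathcal{M}})$. Following the discussion after Definition~\ref{def:CI}, take $\textbf{F}$ to be an interactive communication enabling omniscience and $\textbf{J} = X_{\mathcal{M}}^n$; then $\BL = (\textbf{J}, \textbf{F})$ is a $\CI$, and since $\textbf{F}$ is a deterministic function of $X_{\mathcal{M}}^n$, its rate is $\frac{1}{n}H(\BL) = \frac{1}{n}H(X_{\mathcal{M}}^n) = H(X_{\mathcal{M}})$. Taking the infimum over admissible $\CI$s then gives the bound. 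The second inequality $\CI(X_{\mathcal{M}})\ge \CI_W(X_{\mathcal{M}})$ is immediate from the definitions: every $\CI$ is a $\CI_W$, so the set of achievable $\CI$ rates is contained in the set of achievable $\CI_W$ rates and the infima obey the reverse inclusion.

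The third inequality $\CI_W(X_{\mathcal{M}}) \ge \textbf{I}(X_{\mathcal{M}})$ is the main content. Let $\BL$ be any $\CI_W$ and pick any $\lambda^*\in\Lambda^*$, i.e., an optimizer in the linear program \eqref{skcapacity}. Since $\textbf{I}(X_{\mathcal{M}}^n) = n\textbf{I}(X_{\mathcal{M}})$, we have the identity
\begin{equation*}
n\textbf{I}(X_{\mathcal{M}}) = H(X_{\mathcal{M}}^n) - \sum_{B\in\mathcal{B}} \lambda_B^* H(X_B^n | X_{B^c}^n),
\end{equation*}
while the definition \eqref{cmi} (as a maximum over $\Lambda^*$) gives
\begin{equation*}
\textbf{I}(X_{\mathcal{M}}^n | \BL) \ge H(X_{\mathcal{M}}^n | \BL) - \sum_{B\in\mathcal{B}} \lambda_B^* H(X_B^n | X_{B^c}^n, \BL).
\end{equation*}
Subtracting the second from the first and regrouping terms yields
\begin{equation*}
n\textbf{I}(X_{\mathcal{M}}) - \textbf{I}(X_{\mathcal{M}}^n|\BL) \le I(X_{\mathcal{M}}^n;\BL) - \sum_{B\in\mathcal{B}} \lambda_B^* I(X_B^n;\BL|X_{B^c}^n) \le I(X_{\mathcal{M}}^n;\BL) \le H(\BL),
\end{equation*}
where the middle step uses that $\lambda_B^* \ge 0$ and conditional mutual informations are nonnegative. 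Dividing by $n$ and using the defining property $\frac{1}{n}\textbf{I}(X_{\mathcal{M}}^n|\BL)\to 0$ of a $\CI_W$, we get $\textbf{I}(X_{\mathcal{M}}) \le \liminf_n \tfrac{1}{n}H(\BL)$. Taking the infimum over $\CI_W$s $\BL$ (equivalently, choosing $\BL$ with rate within $\epsilon$ of $\CI_W(X_{\mathcal{M}})$ and letting $\epsilon\downarrow 0$) concludes the proof.

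The main obstacle is making sure the identities above line up correctly: the subtle point is that the maximization in \eqref{cmi} is restricted to $\Lambda^*$ rather than all of $\Lambda$, which is exactly what licenses the use of the \emph{same} $\lambda^*$ to simultaneously evaluate $\textbf{I}(X_{\mathcal{M}}^n)$ (as a linear program value) and to lower bound $\textbf{I}(X_{\mathcal{M}}^n|\BL)$. Once this observation is in place, the rest of the calculation is a short sequence of mutual-information rewrites and nonnegativity bounds.
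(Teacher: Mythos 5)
Your proof is correct. The paper does not reprove this proposition (it simply cites \cite[Proposition 1]{MK14}), and your argument is the standard one implicit in the surrounding discussion: the omniscience-based pair $(\textbf{J},\textbf{F})$ with $\textbf{J}=X_{\cM}^n$ gives the first inequality, inclusion of $\CI$s among $\CI_W$s gives the second, and for the third your key observation—that the maximization in \eqref{cmi} being restricted to $\Lambda^*$ lets the same $\lambda^*$ evaluate $\textbf{I}(X_{\cM}^n)$ exactly and lower bound $\textbf{I}(X_{\cM}^n|\BL)$, after which everything reduces to nonnegativity of (conditional) mutual information—is exactly the point of the modified definition noted in the paper's footnote.
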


We conclude this section by stating the lower bound on communication complexity as derived in \cite{MK14}:

\begin{theorem}\cite[Theorem 2]{MK14}
$$
R_{\SK}\geq \CI(X_{\mathcal{M}})-\textbf{I}(X_{\mathcal{M}}).
$$
\label{th:commcomp}
\end{theorem}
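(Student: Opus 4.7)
The plan is to extend Tyagi's two-terminal argument from \cite{Tyagi13}: starting from an interactive communication $\textbf{F}$ with $\frac{1}{n}\log|\cF|\le R_{\SK}+\epsilon$ and a maximal-rate SK $\textbf{K}$ obtained from $\textbf{F}$, show that the augmented pair $(\textbf{K},\textbf{F})$ is itself a $\CI$ for $X_{\cM}$. Once this is established, Definition~\ref{def:CIrate} forces $\frac{1}{n}H(\textbf{K},\textbf{F})\ge \CI(X_{\mathcal{M}})-o(1)$, which combined with $\frac{1}{n}H(\textbf{K})\le \mathbf{I}(X_{\cM})$ and $H(\textbf{K},\textbf{F})\le H(\textbf{K})+H(\textbf{F})$ yields $R_{\SK}+\epsilon\ge \CI(X_{\mathcal{M}})-\mathbf{I}(X_{\mathcal{M}})-o(1)$; letting $\epsilon\to 0$ gives the theorem.

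The core technical work is verifying that $\frac{1}{n}\mathbf{I}(X_{\cM}^n\mid \textbf{K},\textbf{F})\to 0$. For any $\lambda\in\Lambda^\ast$, since $(\textbf{K},\textbf{F})$ is a deterministic function of $X_{\cM}^n$, a direct manipulation using $H(X_{\cM}^n)-\sum_B\lambda_B H(X_B^n\mid X_{B^c}^n)=n\mathbf{I}(X_{\mathcal{M}})$ gives
\[
H(X_{\mathcal{M}}^n\mid\textbf{K},\textbf{F})-\sum_B\lambda_B H(X_B^n\mid X_{B^c}^n,\textbf{K},\textbf{F}) = n\mathbf{I}(X_{\mathcal{M}})-H(\textbf{K},\textbf{F})+\sum_B\lambda_B\,I(\textbf{K},\textbf{F};X_B^n\mid X_{B^c}^n).
\]
I would upper-bound the sum on the right by splitting $I(\textbf{K},\textbf{F};X_B^n\mid X_{B^c}^n)\le I(\textbf{F};X_B^n\mid X_{B^c}^n)+H(\textbf{K}\mid X_{B^c}^n,\textbf{F})$, with the second term bounded via Fano's inequality applied to the CR estimate $K_i(X_i^n,\textbf{F})$ at any terminal $i\in B^c$, giving an $o(n)$ contribution.

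The main obstacle is to bound $\sum_B\lambda_B I(\textbf{F};X_B^n\mid X_{B^c}^n)$ by $H(\textbf{F})$ for every $\lambda\in\Lambda$. I would prove this using the interactive chain rule: writing $\textbf{F}=(f_1,\ldots,f_r)$ and expanding the conditional mutual information as $\sum_j I(f_j;X_B^n\mid X_{B^c}^n,f_{<j})$, I observe that if $f_j$ is transmitted by terminal $i$ then $f_j$ is determined by $(X_i^n,f_{<j})$, so $I(f_j;X_B^n\mid X_{B^c}^n,f_{<j})$ vanishes when $i\in B^c$ and equals $H(f_j\mid X_{B^c}^n,f_{<j})\le H(f_j\mid f_{<j})$ when $i\in B$. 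Exchanging the summations and invoking the fractional-partition identity $\sum_{B\ni i}\lambda_B=1$ then collapses the bound to $\sum_j H(f_j\mid f_{<j})=H(\textbf{F})$, which is the crucial lemma.

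Plugging this lemma in, and using $H(\textbf{K},\textbf{F})=H(\textbf{F})+H(\textbf{K})-I(\textbf{K};\textbf{F})\ge H(\textbf{F})+n\mathbf{I}(X_{\mathcal{M}})-o(n)$ (from the maximal-rate and near-secrecy properties of $\textbf{K}$), the right-hand side of the displayed equation reduces to $o(n)$, uniformly in $\lambda\in\Lambda^\ast$. Taking the maximum over $\Lambda^\ast$ and invoking Proposition II of \cite{MT10} to rule out negativity yields $\mathbf{I}(X_{\cM}^n\mid\textbf{K},\textbf{F})=o(n)$, so $(\textbf{K},\textbf{F})$ is indeed a $\CI$, completing the argument along the lines outlined above.
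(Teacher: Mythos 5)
Your argument is correct and is essentially the proof of this theorem given in \cite{MK14} (which the present paper only cites), itself the natural multiterminal extension of Tyagi's two-terminal argument: show that $(\BK,\BF)$ from any capacity-achieving protocol is a $\CI$ by combining the interactive chain-rule bound $\sum_B \lambda_B I(\BF;X_B^n|X_{B^c}^n)\le H(\BF)$ for $\lambda\in\Lambda$ with Fano's inequality for the CR estimates, and then compare rates. The only points glossed over are standard: the upper bound $\frac{1}{n}H(\BK)\le \mathbf{I}(X_{\cM})+o(1)$ is the converse of the SK capacity theorem rather than part of Definition~\ref{def:RSKr}, and a routine diagonalization over $\epsilon$ is needed to produce a single sequence $\BL^{(n)}$ satisfying the $\CI_W$ condition.
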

\vspace*{-10pt} 
By Proposition~\ref{prop:ineqs}, the lower bound above is non-negative.

\section{$R_{\SK}$-maximality in uniform hypergraph PIN models} \label{sec:uh}

This section contains the main result of this work. First we will quickly introduce the hypergraph PIN model. The model is defined on an underlying hypergraph $\cH = (\cV,\cE)$ with $\cV = \cM$, the set of $m$ terminals of the model, and $\cE$ being a collection of hyperedges, i.e., subsets of $\cV$. For $n \in \N$, define $\cH^{(n)}$ to be the multi-hypergraph $(\cV,\cE^{(n)})$, where $\cE^{(n)}$ is the multiset of hyperedges formed by taking $n$ copies of each hyperedge of $\cH$. Associated with each hyperedge $e \in \cE^{(n)}$ is a Bernoulli$(1/2)$ random variable $\xi_e$; the $\xi_e$s associated with distinct hyperedges in $\cE^{(n)}$ are independent. With this, the random variables $X_i^n$, for $i\in\mathcal{M}$, are defined as $X_i^n=(\xi_e$ : $e\in\mathcal{E}^{(n)}$ and $i\in e$). When every $e\in\cE$ satisfies $|e|=t$, we call $\cH$ a \emph{$t$-uniform hypergraph}. We will show that any Type $\cS$ uniform hypergraph PIN model is $R_{\SK}$-maximal.

\begin{theorem}
For a Type $\cS$ PIN model defined on an underlying $t$-uniform hypergraph $\cH = (\cV,\cE)$, we have $\CI(X_{\cM})=\CI_W(X_{\cM})=H(X_{\cM})$, and hence, $R_{\SK}=R_{\CO} = \frac{m-t}{m-1} |\cE|$.
\label{th:uh}
\end{theorem}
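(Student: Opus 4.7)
My plan splits into an explicit $R_{\CO}$ computation (a direct consequence of Type $\cS$ and $t$-uniformity) and the inequality $\CI_W(X_{\cM}) \geq H(X_{\cM})$, which is where the content lies. Since the edge-bits are independent Ber$(1/2)$ variables, $H(X_{\cM}) = |\cE|$ and $H(X_i) = d_i$ (the hypergraph degree of $i$); the $t$-uniformity gives $\sum_i d_i = t|\cE|$, and the Type $\cS$ hypothesis lets me invoke \eqref{eq:I} with $\cP = \cS$ to get $\textbf{I}(X_{\cM}) = \Delta(\cS) = \frac{t-1}{m-1}|\cE|$, so \eqref{omni} yields $R_{\CO} = \frac{m-t}{m-1}|\cE|$. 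Once $\CI_W(X_{\cM}) \geq H(X_{\cM})$ is established, Proposition~\ref{prop:ineqs} collapses the chain into $\CI = \CI_W = H(X_{\cM})$, and Theorem~\ref{th:commcomp} gives $R_{\SK} \geq H(X_{\cM}) - \textbf{I}(X_{\cM}) = R_{\CO}$, matching the standard $R_{\SK} \leq R_{\CO}$ noted after Definition~\ref{def:RSKr}.

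For the central bound, I would take any Wyner CI $\BL$ (so $\frac{1}{n}\textbf{I}(X_{\cM}^n|\BL) \to 0$) and specialize the maximum in \eqref{cmi} to the weight vector $\lambda^*$ with $\lambda^*_{\cM \setminus \{i\}} = \frac{1}{m-1}$ for each $i \in \cM$ and $\lambda^*_B = 0$ otherwise. A direct check with $H(X_{\cM\setminus\{i\}}|X_i) = |\cE| - d_i$ shows $\sum_B \lambda^*_B H(X_B|X_{B^c}) = \frac{m-t}{m-1}|\cE| = R_{\CO}$, so $\lambda^* \in \Lambda^*$. Plugging $\lambda^*$ into \eqref{cmi} and using $H(X_{\cM\setminus\{i\}}^n|X_i^n,\BL) = H(X_{\cM}^n|\BL) - H(X_i^n|\BL)$ collapses to
\[
\textbf{I}(X_{\cM}^n|\BL) \;\geq\; \frac{1}{m-1}\Bigl[\,\sum_{i=1}^{m} H(X_i^n|\BL) - H(X_{\cM}^n|\BL)\,\Bigr].
\]

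To turn this into a bound on $H(\BL)$, I would invoke Shearer's entropy inequality (in its conditional form) on the vertex-star cover $\{E_i\}_{i\in\cM}$ of $\cE$, where $E_i = \{e \in \cE : i \in e\}$: in a $t$-uniform hypergraph every edge lies in exactly $t$ stars, so $\sum_i H(X_i^n|\BL) \geq t\, H(X_{\cM}^n|\BL)$. Combining with the previous display and using $H(X_{\cM}^n|\BL) = n|\cE| - H(\BL)$ (since $\BL$ is a function of $X_{\cM}^n$),
\[
\textbf{I}(X_{\cM}^n|\BL) \;\geq\; \tfrac{t-1}{m-1}\bigl(n|\cE| - H(\BL)\bigr),
\]
so $\tfrac{1}{n}H(\BL) \geq |\cE| - \tfrac{m-1}{(t-1)n}\textbf{I}(X_{\cM}^n|\BL) \to |\cE|$, as required.

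The step I expect to be the main obstacle is spotting that Shearer's inequality is the right tool. Using the $\lambda^*$ inequality alone would yield only the weak conclusion $\tfrac{1}{n}H(\BL) \geq \textbf{I}(X_{\cM})$; it is the \emph{exact} $t$-fold cover of the edge set by the vertex-stars that aligns Shearer with the $\lambda^*$-inequality and drives the bound all the way up to $H(X_{\cM})$. This is also the point where I would expect the method to fail beyond $t$-uniform hypergraphs, making it a natural candidate for where the counterexample of Section~\ref{sec:counterex} would bite.
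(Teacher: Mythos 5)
Your proof is correct. Its overall architecture coincides with the paper's---the same singleton weight vector in $\Lambda^*$ (the paper's Lemma~\ref{lem:singlam}), the same specialization of \eqref{cmi}, and the same final appeal to Proposition~\ref{prop:ineqs}, Theorem~\ref{th:commcomp} and \eqref{omni}---but you replace the one genuinely technical ingredient by a different and far shorter argument. The paper's Lemma~\ref{lem:mi}, $\sum_{i=1}^m I(X_i^n;\BL)\le tH(\BL)$, is proved in the Appendix by first reducing to the complete hypergraph $K_{m,t}$ and then running a combinatorial allocation of chain-rule terms (Algorithm~\ref{alg:ta}, Claims~\ref{cl:1} and~\ref{cl:2}). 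Because $\BL$ is a function of $X_{\cM}^n$ and $\sum_i H(X_i^n)=tH(X_{\cM}^n)$ in the $t$-uniform case, that lemma is \emph{exactly equivalent} to your conditional Shearer bound $\sum_i H(X_i^n\mid\BL)\ge tH(X_{\cM}^n\mid\BL)$, which follows in one line because the vertex stars $\cE_i=\{e\in\cE: i\in e\}$ cover every hyperedge exactly $t$ times; the conditional form of Shearer's lemma is legitimate (apply the unconditional version to each conditional law given $\BL=\ell$ and average, or note that the usual chain-rule/submodularity proof goes through with an extra conditioning variable). Your route buys brevity and also makes the paper's preliminary reduction to $K_{m,t}$ unnecessary, since Shearer applies directly to an arbitrary $t$-uniform edge set; conversely, the paper's allocation argument can be read as an explicit, self-contained proof of precisely this instance of Shearer's inequality. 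The bookkeeping parts---$\Delta(\cS)=\frac{t-1}{m-1}|\cE|$, hence $R_{\CO}=\frac{m-t}{m-1}|\cE|$, and the sandwich between the lower bound of Theorem~\ref{th:commcomp} and $R_{\SK}\le R_{\CO}$---match the paper. One caveat shared with the paper: the final division by $t-1$ requires $t\ge 2$, which is implicit throughout.
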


The proof will require two technical lemmas which we state below. The first lemma identifies a $\lambda\in\Lambda^*$ when a source is Type $\cS$.

\begin{lemma}
Let the singleton partition $\cS$ be a minimizer for \eqref{eq:I}. Define $\tilde{\lambda} = (\tilde{\lambda}_B: B \in \cB)$ such that $\tilde{\lambda}_B = \frac{1}{m-1}$ whenever $|B| = m-1$, and $\tilde{\lambda}_B = 0$ otherwise. Then $\tilde{\lambda}\in\Lambda^*$.
\label{lem:singlam}
\end{lemma}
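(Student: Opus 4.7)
The plan is to verify directly that $\tilde{\lambda}$ is feasible for the linear program in \eqref{skcapacity} and then to compute its objective value, using the hypothesis that $\cS$ minimizes \eqref{eq:I}.

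First, I would check $\tilde{\lambda} \in \Lambda$. Non-negativity is immediate. For each $i \in \cM$, the subsets $B \in \cB$ of cardinality $m-1$ that contain $i$ are exactly the $m-1$ sets $\cM \setminus \{j\}$ with $j \neq i$. Hence
$$\sum_{B \ni i} \tilde{\lambda}_B \;=\; (m-1)\cdot \frac{1}{m-1} \;=\; 1,$$
which confirms $\tilde{\lambda} \in \Lambda$.

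Next I would evaluate $\sum_{B \in \cB} \tilde{\lambda}_B H(X_B \mid X_{B^c})$ at $\tilde{\lambda}$. Since only the sets $B = \cM \setminus \{i\}$ contribute, and $H(X_{\cM\setminus\{i\}} \mid X_i) = H(X_{\cM}) - H(X_i)$, a short computation yields
$$H(X_{\cM}) \;-\; \sum_{B \in \cB} \tilde{\lambda}_B H(X_B \mid X_{B^c}) \;=\; \frac{1}{m-1}\Bigl(\sum_{i=1}^m H(X_i) - H(X_{\cM})\Bigr) \;=\; \Delta(\cS).$$

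By the hypothesis that $\cS$ is a minimizer in \eqref{eq:I}, we have $\Delta(\cS) = \mathbf{I}(X_{\cM})$, so the value of $\tilde{\lambda}$ as a feasible point of the LP in \eqref{skcapacity} coincides with the optimal value $H(X_{\cM}) - \mathbf{I}(X_{\cM})$. Therefore $\tilde{\lambda} \in \Lambda^*$. There is no genuine obstacle here; the argument is entirely a feasibility check followed by a direct calculation that exploits the explicit expression for $\Delta(\cS)$.
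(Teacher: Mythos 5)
Your proof is correct and follows essentially the same route as the paper's: verify feasibility of $\tilde{\lambda}$, evaluate the objective to get $\Delta(\cS)$, and invoke the hypothesis that $\cS$ minimizes \eqref{eq:I} to conclude optimality. You simply spell out the feasibility check and the computation $H(X_{\cM}) - \sum_{B}\tilde{\lambda}_B H(X_B\mid X_{B^c}) = \Delta(\cS)$ in more detail than the paper does.
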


\begin{IEEEproof}
Observe that $\tilde{\lambda}\in\Lambda$. Putting $\lambda=\tilde{\lambda}$ in \eqref{skcapacity} we have $H(X_{\cM})-\sum_{B\in\cB}\tilde{\lambda}_BH(X_B| X_{B^c})=\Delta(\cS)=\textbf{I}(X_{\cM})$, as $\cS$ is a minimizer in \eqref{eq:I}. Thus $\tilde{\lambda}$ is optimal, i.e., $\tilde{\lambda}\in\Lambda^*$.
\end{IEEEproof}

\begin{lemma}
\label{lem:mi}
For any \emph{$t$-uniform hypergraph} PIN model and any function $\textbf{L}$ of $X_{\mathcal{M}}^n$ we have:
\begin{equation}
\sum_{i=1}^m I(X_i^n;\textbf{L}) \leq tH(\textbf{L}). \label{eq:mi}
\end{equation}
\end{lemma}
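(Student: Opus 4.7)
The plan is to rewrite the target inequality in a form that makes its origin transparent as a Shearer-type covering bound applied to the independent coins $\{\xi_e\}$ underlying the PIN model. Writing $I(X_i^n;\textbf{L}) = H(X_i^n) - H(X_i^n|\textbf{L})$ and noting that $\textbf{L}$ is a function of $X_\cM^n$, so $H(\textbf{L}) = I(X_\cM^n;\textbf{L}) = H(X_\cM^n) - H(X_\cM^n|\textbf{L})$, the claim \eqref{eq:mi} is equivalent to
\[
\sum_{i=1}^m H(X_i^n) - \sum_{i=1}^m H(X_i^n|\textbf{L}) \;\le\; t\bigl[H(X_\cM^n) - H(X_\cM^n|\textbf{L})\bigr].
\]

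First I would handle the unconditional part by a simple counting identity. Because $\cH$ is $t$-uniform, every hyperedge $e\in\cE^{(n)}$ contributes its independent Bernoulli$(1/2)$ bit $\xi_e$ to exactly $t$ of the tuples $X_1^n,\ldots,X_m^n$, namely those with $i\in e$. Since the $\xi_e$'s are mutually independent, $H(X_\cM^n) = |\cE^{(n)}| = n|\cE|$ and $H(X_i^n) = |\{e\in\cE^{(n)}: i\in e\}|$, giving the exact identity
\[
\sum_{i=1}^m H(X_i^n) \;=\; \sum_{e\in\cE^{(n)}} |e| \;=\; t\,|\cE^{(n)}| \;=\; t\,H(X_\cM^n).
\]
So the unconditional terms on the two sides of the desired inequality cancel, and the problem collapses to showing
\[
\sum_{i=1}^m H(X_i^n|\textbf{L}) \;\ge\; t\,H(X_\cM^n|\textbf{L}).
\]

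For this I would invoke Shearer's lemma (the standard entropy covering inequality) applied to the family $\{\xi_e : e\in\cE^{(n)}\}$ with the cover $\{E(i) : i\in\cM\}$, where $E(i) = \{e\in\cE^{(n)} : i\in e\}$. Because $\cH$ is $t$-uniform, each $e\in\cE^{(n)}$ belongs to exactly $t$ of these sets, so Shearer's bound, applied to the joint distribution conditioned on $\textbf{L}$, gives exactly $\sum_i H(\xi_{E(i)}|\textbf{L}) \ge t\, H(\xi_{\cE^{(n)}}|\textbf{L})$, which is the displayed inequality since $X_i^n = \xi_{E(i)}$ and $X_\cM^n = \xi_{\cE^{(n)}}$.

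The only subtlety is to confirm that Shearer applies in the conditional form; this is routine, since Shearer's lemma is a purely entropic identity derived from chain rule and the fact that conditioning reduces entropy, and it therefore holds for any joint law, in particular the law obtained by conditioning on $\textbf{L}$. The essential combinatorial input is the exact $t$-fold covering provided by $t$-uniformity of $\cH$; no other feature of the PIN structure is used beyond independence of the $\xi_e$'s (which gave us the unconditional equality).
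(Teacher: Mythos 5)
Your proof is correct, and it takes a genuinely different and substantially shorter route than the paper's. The paper first reduces to the complete $t$-uniform hypergraph $K_{m,t}$ by padding with extra independent edge variables, then expands the sum $\sum_i I(X_i^n;\textbf{L})$ via chain rules into conditional mutual information terms $Q_e$, and distributes these terms among the $R(i)$'s through an explicit allocation algorithm (Algorithm~1) whose correctness requires two separate claims (no ERROR, all terms exhausted). You instead cancel the unconditional entropies using the exact degree identity $\sum_i H(X_i^n) = t\,H(X_\cM^n)$ --- which is where the independence of the $\xi_e$'s and the $t$-uniformity enter --- and reduce the whole statement to the conditional form of Shearer's lemma for the $t$-fold cover $\{E(i)\}_{i\in\cM}$ of $\cE^{(n)}$. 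Your handling of the one subtle point is right: Shearer requires no independence assumption and is proved from the chain rule plus ``conditioning reduces entropy'' (exactly the content of the paper's Lemmas~\ref{lem:rv:2} and~\ref{lem:rv:1}), so it holds verbatim after conditioning on $\textbf{L}=\ell$ and averaging. What your approach buys is the elimination of the reduction to $K_{m,t}$, the ordering of hyperedges, and the bookkeeping algorithm; it also isolates cleanly which hypotheses do what (independence gives the unconditional equality, uniformity gives the exact $t$-fold cover). What the paper's approach buys is self-containedness --- it never cites Shearer, effectively re-deriving a Shearer-type bound by hand --- but at considerable cost in length. One could fairly say the paper's allocation argument is an unrolled proof of the very covering inequality you invoke.
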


The lengthy proof of this lemma is deferred to the Appendix. We now proceed to prove Theorem \ref{th:uh}.

\begin{IEEEproof}[Proof of Theorem \ref{th:uh}]
For any Type $\cS$ source $\cX_{\cM}$, we have 
\begin{equation}
\textbf{I}(X_{\mathcal{M}}^n | \textbf{L}) \geq H(X_{\mathcal{M}}^n | \textbf{L})- \frac{1}{m-1} \sum_{i=1}^m H(X_{\cM \setminus \{i\}}^n|X_{i}^n,\textbf{L})  \label{cmi_uh}
\end{equation}
where \eqref{cmi_uh} follows from \eqref{cmi} and Lemma \ref{lem:singlam}. Now assume that $\cX_{\cM}$ arises from a PIN model defined on a $t$-uniform hypergraph $\cH = (\cV,\cE)$, and consider any function $\textbf{L}$ of $X_{\mathcal{M}}^n$. This allows us further simplification of \eqref{cmi_uh}:
\begin{align}
\textbf{I}(X_{\mathcal{M}}^n | \textbf{L}) 
   & \geq H(X_{\mathcal{M}}^n) - H(\textbf{L}) \notag \\
   & \hspace{1.2em} - \frac{1}{m-1} \sum_{i=1}^m \left[H(X_{\cM}^n) - H(X_i^n) - H(\textbf{L}|X_i^n)\right] \notag \\
   & = \frac{n(t-1)|\cE|}{m-1} - H(\textbf{L}) + \frac{1}{m-1} \sum_{i=1}^m H(\textbf{L}|X_i^n) \label{cmi_uh2}\\
   & = \frac{n(t-1)|\cE|}{m-1} - \frac{1}{m-1} \left[ \sum_{i=1}^m I(X_i^n;\BL)-H(\BL)\right] \notag \\
   & = \frac{n(t-1)}{m-1}\left(|\cE|-\frac{1}{n}H(\BL)\right) \notag \\
   & \hspace{1.2em}- \frac{1}{m-1} \left[ \sum_{i=1}^m I(X_i^n;\BL)-tH(\BL)\right] \notag \\
   & \geq \frac{n(t-1)}{m-1}\left(|\cE|-\frac{1}{n}H(\BL)\right), \label{cmi:uh3}
\end{align}
the equality \eqref{cmi_uh2} using the facts that $H(X_{\cM}^n) = n|\cE|$ and $\sum_{i=1}^m H(X_i^n) = nt|\cE|$, and \eqref{cmi:uh3} following from Lemma~\ref{lem:mi}.

We will now compute $\CI(X_{\cM})$ using Proposition \ref{prop:ineqs}. The upper bound gives us $\CI(X_{\cM}) \le |\cE|$, as $H(X_{\cM})=|\cE|$. For the lower bound, let $\textbf{L}$ be any $\CI_W$ so that for any $\epsilon > 0$, we have $\frac{1}{n} \textbf{I}(X_{\mathcal{M}}^n | \textbf{L}) < \frac{(t-1)\epsilon}{(m-1)}$ for all sufficiently large $n$. The bound in \eqref{cmi:uh3} thus yields $\frac{1}{n}H(\BL)> |\cE|-\epsilon$ for all sufficiently large $n$. Hence, it follows that $\CI_W(X_{\cM}) \ge |\cE|$. From the upper and lower bounds in Proposition \ref{prop:ineqs}, we now obtain $CI_W(X_{\cM})=CI(X_{\cM})=H(X_{\cM})$.

Now from Theorem \ref{th:commcomp} we have $R_{\SK}\geq CI(X_{\cM})-\textbf{I}(X_{\cM})$. Hence we have
\begin{gather}
R_{\SK} \ge |\cE|- \textbf{I}(X_{\cM})= H(X_{\cM})-\textbf{I}(X_{\cM})= R_{\CO}, \label{cmi:uh4}
\end{gather}
where the last equality is from \eqref{omni}. But we also have $R_{\SK}\leq R_{\CO}$, as pointed out in Section \ref{sec:prelim}, which proves that $R_{\SK} = R_{\CO}$. 

To obtain the exact expression for $R_{\CO}$, we note that by \eqref{omni} and \eqref{eq:I}, $R_{\CO} = H(X_{\cM}) - \Delta(\cS) = \frac{m}{m-1} H(X_{\cM}) - \frac{1}{m-1} \sum_{i=1}^m H(X_i)$. This simplifies to the expression stated in the theorem using the facts (already mentioned above) that $H(X_{\cM}) = |\cE|$ and $\sum_{i=1}^m H(X_i) = t |\cE|$.
\end{IEEEproof} 

We will now show that there indeed exist Type $\cS$ $t$-uniform hypergraph PIN models. Call $K_{m,t}=(\cV,\cE)$ a \emph{complete $t$-uniform hypergraph} on $m$ vertices when $e\subset\cV$ is contained in $\cE$ iff $|e|=t$. Using Proposition \ref{prop:min} we show that complete $t$-uniform hypergraph PIN models are Type $\cS$.

\begin{lemma}
\label{lem:comuh}
Complete $t$-uniform hypergraph PIN models are Type $\cS$.
\end{lemma}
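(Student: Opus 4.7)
The plan is to invoke Proposition~\ref{prop:min}(a) and verify $\Delta(\cS)\le\Delta(\cP_B)$ for every $B\in\Omega$, i.e., every $B\subsetneq\cM$ with $1\le|B|\le m-2$. The verification is a direct computation in binomial coefficients, specific to the combinatorics of $K_{m,t}$.

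First I would unpack the PIN structure of $K_{m,t}$. Since distinct hyperedges contribute independent Bernoulli$(1/2)$ bits, for any $A\subseteq\cM$ the entropy $H(X_A)$ equals the number of hyperedges that meet $A$, so $H(X_A)=\binom{m}{t}-\binom{m-|A|}{t}$, using the convention $\binom{n}{r}=0$ for $n<r$. In particular $H(X_\cM)=\binom{m}{t}$, $H(X_i)=\binom{m-1}{t-1}$ (by Pascal), and $H(X_{B^c})=\binom{m}{t}-\binom{|B|}{t}$.

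Plugging these into the definition of $\Delta$ and writing $k=|B|$, I expect to obtain
\begin{equation*}
\Delta(\cS)=\frac{(t-1)\binom{m}{t}}{m-1},\qquad \Delta(\cP_B)=\binom{m-1}{t-1}-\frac{1}{k}\binom{k}{t},
\end{equation*}
the first using $m\binom{m-1}{t-1}=t\binom{m}{t}$ to handle $\sum_i H(X_i)$, and the second coming from the short cancellation $\sum_{j=1}^{k} H(X_{b_j})+H(X_{B^c})-H(X_\cM)=k\binom{m-1}{t-1}-\binom{k}{t}$. The required inequality $\Delta(\cS)\le\Delta(\cP_B)$ then rearranges, after substituting $\binom{m}{t}=\tfrac{m}{t}\binom{m-1}{t-1}$ and $\binom{k}{t}=\tfrac{k}{t}\binom{k-1}{t-1}$ and clearing a common factor $1/t$, to
\begin{equation*}
\binom{m-1}{t-1}\cdot\frac{m-t}{m-1}\ \ge\ \binom{k-1}{t-1}.
\end{equation*}

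The finishing stroke is the telescoping identity $\binom{m-1}{t-1}\cdot\tfrac{m-t}{m-1}=\binom{m-2}{t-1}$, which collapses the target to $\binom{m-2}{t-1}\ge\binom{k-1}{t-1}$; this holds because $k\le m-2$ and $\binom{n}{t-1}$ is non-decreasing in $n$ (vanishing for $n<t-1$, which also absorbs the boundary case $k<t$ where $\binom{k}{t}=0$). The main obstacle is just keeping the three-parameter bookkeeping clean; once the telescoping identity is spotted, the inequality reduces to a one-line monotonicity statement in a single binomial coefficient.
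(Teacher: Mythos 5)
Your proof is correct and follows essentially the same route as the paper: both compute $\Delta(\cS)=\frac{t-1}{m-1}\binom{m}{t}$ and $\Delta(\cP_B)=\binom{m-1}{t-1}-\frac{1}{|B|}\binom{|B|}{t}$ and reduce the comparison to $\binom{m-2}{t-1}\ge\binom{|B|-1}{t-1}$, which holds since $|B|\le m-2$. The only (cosmetic) difference is that you absorb the paper's two cases $|B|\ge t$ and $|B|<t$ into one computation via the convention $\binom{n}{r}=0$ for $n<r$, which is a clean simplification.
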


\begin{proof}
Fix a set $B\subsetneq\cM$ with $|B|\leq m-2$. We calculate $\Delta(P_B)$, where $P_B$ is defined as in Proposition~\ref{prop:min}, and will show that $\Delta(P_B)>\Delta(\cS)$.  For $K_{m,t}$ we have, $H(X_i)=\binom{m-1}{t-1}$ and $H(X_{\cM})=\binom{m}{t}$ and therefore $\Delta(\cS)=\frac{t-1}{m-1}\binom{m}{t}$. To evaluate $\Delta(P_B)$, note that $H(X_{B^c})$ is the total number of hyperedges in $\cE$ which contain at least one terminal from $B^c$. Observe that if ${|B|}\geq t$ we have $H(X_{B^c})=\binom{m}{t}-\binom{|B|}{t}$. Otherwise, we have $H(X_{B^c})=\binom{m}{t}$.

So first consider ${|B|}\geq t$. Under this condition we see that 
\begin{align}
\Delta(P_B)& =\frac{1}{|B|}\left(\sum_{i\in B} H(X_i)+H(X_{B^c})-H(X_{\cM})\right)\notag\\
                  &=\binom{m-1}{t-1}-\frac{1}{|B|}\binom{|B|}{t}. \notag
\end{align}
Thus,
\begin{align}
\Delta(P_B)-\Delta(\cS) & = \binom{m-1}{t-1}-\frac{1}{|B|}\binom{|B|}{t}-\frac{t-1}{m-1}\binom{m}{t} \label{comuh:1}\\
                                      & = \frac{1}{t}\biggl[\frac{(m-1)!\ t}{(m-t)!\ (t-1)!}\nonumber \\
                                      & \hspace{1.2em}-\frac{m!}{(t-2)!\ (m-t)!\ (m-1)}\notag\\
                                      &\hspace{1.2em}-\binom{|B|-1}{t-1}\biggr] \nonumber\\
                                      & = \frac{1}{t}\biggl[\frac{(m-1)!}{(t-2)!\ (m-t)!}\left(\frac{t}{t-1}-\frac{m}{m-1}\right)\notag\\
                                      & \hspace{1.2em}-\binom{|B|-1}{t-1}\biggr] \nonumber\\
                                      & = \frac{1}{t}\left[\binom{m-2}{t-1}-\binom{|B|-1}{t-1}\right] \label{comuh:2}\\
                                      & \geq 0 \label{case1} 
\end{align}
where \eqref{case1} holds as ${|B|}\leq m-2$.

Next consider ${|B|}<t$. Under this condition we have
\begin{align}
\Delta(P_B)&=\frac{1}{|B|}\left(\sum_{i\in B} H(X_i)+H(X_{B^c})-H(X_{\cM})\right)\notag\\
                  &=\binom{m-1}{t-1}. \notag
\end{align}
Thus, using \eqref{comuh:1} and \eqref{comuh:2} we have
\begin{align}
\Delta(P_B)-\Delta(\cS)& =\binom{m-1}{t-1}-\frac{t-1}{m-1}\binom{m}{t}\notag\\
                                     & =\frac{1}{t}\binom{m-2}{t-1}\notag \\ 
                                     & \geq 0. \label{case2}
\end{align}

Using Proposition \ref{prop:min}, \eqref{case1} and \eqref{case2}, we have the result.
\end{proof}

\begin{remark*}
There is in fact a broad class of ordinary graph ($t=2$) PIN models which are Type $\cS$. Corollary 7.2 of \cite{MKS14} showed that the PIN model on the complete graph on $m$ vertices, $K_m$, is Type $\cS$. Using Proposition~\ref{prop:min}, it can be easily verified that the Harary graph PIN model (see \cite{KMS13}), which contains the complete graph PIN model and the PIN model on the $m$-cycle as subclasses, is Type $\cS$. 
\end{remark*}

\section{Are all Type $\cS$ sources $R_{\SK}$-maximal?} \label{sec:counterex}

Section \ref{sec:uh} showed that Type $\cS$ PIN models are $R_{\SK}$-maximal. A natural question that arises is whether all Type $\cS$ sources are $R_{\SK}$-maximal. The answer turns out to be ``No" as seen in the following counterexample.

\begin{example}
Let $W$ be a Ber($p$) rv, for some $p \in [0,1]$: $\Pr[W = 1] = 1 - \Pr[W=0] = p$. Let $X_1,\ldots,X_m$ be rvs that are conditionally independent given $W$, with 
$$\Pr[X_i = 01 |  W = 0] = 1 -  \Pr[X_i = 00 |  W = 0] = 0.5$$ 
and 
$$\Pr[X_i = 11 |  W = 1] = 1 - \Pr[X_i = 10 |  W = 1] = 0.5$$
for $i = 1,2,\ldots,m$. Denote by $h(p)$ the binary entropy of $p$.

It is easy to check that $H(X_A)={|A|}+h(p)$  for all $A\subseteq\cM$, and $H(X_i | X_j) = 1$ for all distinct $i,j \in\cM$. Therefore, all partitions $\cP$ of $\cM$ satisfy $\Delta(\cP)=h(p)$, and hence, $\textbf{I}(X_{\cM})=h(p)$. In particular, $X_{\cM}$ defines a Type $\cS$ source. Furthermore, using \eqref{omni}, we have $R_{\CO}=m$. 

We now show that $R_{\SK}<R_{\CO}$. Consider a Slepian-Wolf code (see \cite[Section 10.3.2]{ElK11}) of rate $H(X_1| X_2)=1$ for terminal 1. All terminals can recover $X_1^n$ since $H(X_1 | X_i)=1$ for all $i\in \{2,3,\cdots,m\}$. Then, using the balanced coloring lemma \cite[Lemma B3]{CN04} on $X_1^n$, an SK of rate $H(X_1)-H(X_1| X_2)=h(p)$ can be obtained. Hence, $R_{\SK}\leq 1<m=R_{\CO}$.
\label{ex:omni}
\end{example}

In fact, there exist non $R_{\SK}$-maximal sources with $\cS$ being a \emph{unique} minimizer for \eqref{eq:I}. To construct such a source we need to define ``clubbing together" of independent multiterminal sources on $\cM$. Formally for independent sources $X_{\cM}^n$ and $Y_{\cM}^n$ define the \emph{clubbed} source $Z_{\cM}^n$ as $Z_i^n=(X_i^n,Y_i^n)$, for all $i\in\cM$. $\Pi_X^*$ and $\Pi_Y^*$ are defined to be the sets of partitions of $\cM$ which are minimizers of \eqref{eq:I} for $X_{\cM}^n$ and $Y_{\cM}^n$ respectively. We will denote the communication complexity (resp. minimum rate of communication for omniscience) for the individual sources $X_{\cM}^n$ and $Y_{\cM}^n$ by $R_{\SK_X}$ and $R_{\SK_Y}$ (resp. $R_{\CO_X}$ and $R_{\CO_Y}$) respectively. The clubbed source satisfies the following result.
\begin{proposition}
Consider two independent multiterminal sources $X_{\cM}^n$ and $Y_{\cM}^n$ and the corresponding clubbed source $Z_{\cM}^n$. Then we have
\begin{equation}
\textbf{I}(Z_{\cM})\geq\textbf{I}(X_{\cM})+\textbf{I}(Y_{\cM}) \label{th:club:1}
\end{equation}
with equality iff $\Pi_X^*\bigcap\Pi_Y^*\neq\emptyset$.
\label{prop:club}
\end{proposition}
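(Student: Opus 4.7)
The plan is to apply the partition-based characterization \eqref{eq:I} of $\textbf{I}(\cdot)$ and exploit the additivity of joint entropies forced by independence of $X_{\cM}^n$ and $Y_{\cM}^n$. Since $X_{\cM}^n$ and $Y_{\cM}^n$ are independent, we have $H(Z_A) = H(X_A) + H(Y_A)$ for every $A \subseteq \cM$; in particular $H(Z_{\cM}) = H(X_{\cM}) + H(Y_{\cM})$. Substituting into the definition of $\Delta$ shows that for every partition $\cP$ of $\cM$ of size at least $2$,
$$\Delta_Z(\cP) \;=\; \Delta_X(\cP) + \Delta_Y(\cP),$$
where the subscripts indicate which source is being used.

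Next I would take the minimum over all partitions $\cP$ on both sides and use the elementary inequality $\min_\cP(f(\cP)+g(\cP)) \ge \min_\cP f(\cP) + \min_\cP g(\cP)$ to obtain
$$\textbf{I}(Z_{\cM}) \;=\; \min_\cP \Delta_Z(\cP) \;\ge\; \min_\cP \Delta_X(\cP) + \min_\cP \Delta_Y(\cP) \;=\; \textbf{I}(X_{\cM}) + \textbf{I}(Y_{\cM}),$$
which is exactly \eqref{th:club:1}.

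For the equality characterization, I would argue both directions from the same display. If $\cP^\ast \in \Pi_X^\ast \cap \Pi_Y^\ast$, then $\Delta_Z(\cP^\ast) = \Delta_X(\cP^\ast) + \Delta_Y(\cP^\ast) = \textbf{I}(X_{\cM}) + \textbf{I}(Y_{\cM})$, which (combined with the already-proved inequality) forces equality in \eqref{th:club:1}. Conversely, assume equality holds and let $\cP^\ast$ be any minimizer of $\Delta_Z$. Then $\Delta_X(\cP^\ast) + \Delta_Y(\cP^\ast) = \textbf{I}(X_{\cM}) + \textbf{I}(Y_{\cM})$, but since $\Delta_X(\cP^\ast) \ge \textbf{I}(X_{\cM})$ and $\Delta_Y(\cP^\ast) \ge \textbf{I}(Y_{\cM})$ term-by-term, each inequality must in fact be an equality; hence $\cP^\ast \in \Pi_X^\ast \cap \Pi_Y^\ast$, so this intersection is nonempty.

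There is no real obstacle here: once independence is used to reduce the clubbed $\Delta$ to a sum, the result becomes the standard fact that the minimum of a sum of two functions equals the sum of their minima precisely when they share a common minimizer. The only point requiring a little care is to state the characterization \eqref{eq:I} in a form that applies uniformly to $X_{\cM}$, $Y_{\cM}$, and $Z_{\cM}$, which is immediate from the definition of $\Delta$.
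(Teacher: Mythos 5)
Your proposal is correct and follows essentially the same route as the paper's proof: independence gives $\Delta_Z(\cP)=\Delta_X(\cP)+\Delta_Y(\cP)$ for every partition, and the inequality plus the equality characterization then follow by minimizing over $\cP$. Your treatment of the ``iff'' direction is in fact slightly more carefully spelled out than the paper's one-line justification, but the underlying argument is identical.
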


\begin{IEEEproof}
Consider any partition $\cP=\{A_1,A_2,\cdots,A_{\ell}\}$ of $\cM$. We have
\begin{align}
\Delta(\cP)&=\frac{1}{\ell-1}\left[\sum_{i=1}^{\ell}H(Z_{A_i})-H(Z_{\cM})\right] \notag\\
                 &=\underbrace{\frac{1}{\ell-1}\left[\sum_{i=1}^{\ell}H(X_{A_i})-H(X_{\cM})\right]}_{\Delta_X(\cP)}\notag\\
                 &\hspace{1.2em}+\underbrace{\frac{1}{\ell-1}\left[\sum_{i=1}^{\ell}H(Y_{A_i})-H(Y_{\cM})\right]}_{\Delta_Y(\cP)} \label{club:1} 
\end{align}
where \eqref{club:1} follows from the independence of $X_{\cM}^n$ and $Y_{\cM}^n$. 

Thus we have from \eqref{club:1} that $\min_{\cP}\Delta(\cP)\geq\min_{\cP}\Delta_X(\cP)+\min_{\cP}\Delta_Y(\cP)$ with equality iff $\cP\in\Pi_X^*\bigcap\Pi_Y^*$. The result follows.
\end{IEEEproof}

We conclude the section by constructing a non $R_{\SK}$-maximal source with $\cS$ being the unique minimizer in \eqref{eq:I}.

\begin{example}
Consider a clubbed source $Z_{\cM}^n=(X_{\cM}^n,Y_{\cM}^n)$, where $X_{\cM}^n$ is the source described in Example \ref{ex:omni} and $Y_{\cM}^n$ corresponds to the PIN model on the complete graph. So, by Lemma \ref{lem:comuh}, we have $\Pi_Y^*=\{\cS\}$. Also, Theorem \ref{th:uh} shows that $Y_{\cM}^n$ is $R_{\SK}$-maximal. 

Since $\Pi_X^*\bigcap\Pi_Y^*=\{\cS\}$, Proposition \ref{prop:club} ensures that independently running protocols achieving $R_{\SK_X}$ and $R_{\SK_Y}$, the SK capacity of $Z_{\cM}^n$ is attained. Also, \eqref{omni} and independence of $X_{\cM}^n$ and $Y_{\cM}^n$ show that $R_{\CO}=R_{\CO_X}+R_{\CO_Y}$. Therefore, $R_{\SK_X}<R_{\CO_X}$ (using Example \ref{ex:omni}) implies that $R_{\SK}<R_{\CO}$.
\label{ex:unique}
\end{example}

\section{Concluding Remarks} \label{sec:conc}

The result of Theorem~\ref{th:uh} is the first exact computation of the communication complexity $R_{\SK}$ in a multiterminal source model with $m > 2$ terminals. In general, however, finding computable expressions or bounds for $R_{SK}$ in a multiterminal setting beyond PIN models appears to be a difficult problem. On the other hand, a more tractable problem may be that of finding a reasonable characterization of the instances of the multiterminal source model which are $R_{\SK}$-maximal. This seems within reach at least for the class of PIN models. For example, one ought to be able to answer the question of whether the Type $\cS$ condition is necessary for (uniform) hypergraph PIN models to be $R_{\SK}$-maximal.

\newpage

\section*{Appendix: Proof of Lemma \ref{lem:mi}}

First we state two lemmas which we will require for the proof. 

\begin{lemma}
\label{lem:rv:2}
For independent random variables $X$,$Y$ and $W$, and any other random variable $Z$, we have
$$
I(X;Z|W)\leq I(X;Z|W,Y).
$$
\end{lemma}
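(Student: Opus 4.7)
The plan is to prove this by a single chain-rule manipulation, reducing the claim to the nonnegativity of a conditional mutual information, after which the independence hypothesis delivers the vanishing term.

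First I would expand $I(X;Y,Z \mid W)$ in two different ways using the chain rule for mutual information:
\[
I(X;Y,Z\mid W) = I(X;Y\mid W) + I(X;Z\mid W,Y) = I(X;Z\mid W) + I(X;Y\mid W,Z).
\]
Equating the two expressions and rearranging yields the identity
\[
I(X;Z\mid W,Y) - I(X;Z\mid W) = I(X;Y\mid W,Z) - I(X;Y\mid W),
\]
valid for arbitrary random variables $X,Y,W,Z$.

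Next I would invoke the hypothesis that $X$, $Y$, and $W$ are (mutually) independent. This forces $X$ and $Y$ to be independent conditional on $W$, so $I(X;Y\mid W)=0$. Substituting collapses the identity to
\[
I(X;Z\mid W,Y) - I(X;Z\mid W) = I(X;Y\mid W,Z) \ge 0,
\]
where the final inequality is the standard nonnegativity of conditional mutual information. This is the claim.

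There is no real obstacle; the only delicate point is reading ``independent random variables $X$, $Y$ and $W$'' as \emph{mutually} independent, since that is precisely what is needed to kill $I(X;Y\mid W)$. The same argument can equivalently be phrased in terms of entropies by writing $I(X;Z\mid W,Y)-I(X;Z\mid W) = [H(X\mid W,Y)-H(X\mid W)] + [H(X\mid W,Z)-H(X\mid W,Y,Z)]$, observing that the first bracket vanishes by the independence of $X$ and $Y$ given $W$, and that the second bracket equals $I(X;Y\mid W,Z)\ge 0$.
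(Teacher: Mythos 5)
Your proof is correct and is exactly the argument the paper gives: expanding $I(X;Y,Z\mid W)$ in two different ways via the chain rule and using $I(X;Y\mid W)=0$, which follows from the (mutual) independence of $X$, $Y$, $W$. Nothing further is needed.
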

\begin{proof}
This follows by expanding $I(X;Y,Z \mid W)$ in two different ways using the chain rule, and noting that $I(X;Y | W) = 0$.
\end{proof}

\begin{lemma}
\label{lem:rv:1}
For independent random variables $X$ and $Y$, and any other random variable $Z$, we have
$$
I(X;Z)+I(Y;Z)\leq I(X,Y;Z).
$$
\end{lemma}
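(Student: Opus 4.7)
The plan is to derive the inequality as a direct consequence of Lemma~\ref{lem:rv:2} combined with the chain rule for mutual information. By the chain rule,
$$I(X,Y;Z) \;=\; I(X;Z) + I(Y;Z\mid X),$$
so the claim reduces to showing that $I(Y;Z) \leq I(Y;Z \mid X)$ whenever $X$ and $Y$ are independent.

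To obtain this inequality, I would invoke Lemma~\ref{lem:rv:2} with a degenerate (constant) random variable in place of its $W$, and with the roles of the $X$ and $Y$ appearing in that lemma swapped. A constant random variable is trivially independent of $X$ and of $Y$, so the hypothesis that the three variables be mutually independent is satisfied as soon as the $X$ and $Y$ of the present lemma are independent. The conclusion of Lemma~\ref{lem:rv:2} then specializes to $I(Y;Z) \leq I(Y;Z \mid X)$, which combined with the chain-rule identity above yields $I(X;Z) + I(Y;Z) \leq I(X,Y;Z)$.

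There is essentially no obstacle to this argument: the content of the inequality is already carried by Lemma~\ref{lem:rv:2}, and the only point to verify is that instantiating with a trivial conditioning variable is legitimate, which is immediate. Alternatively, one could argue directly from entropies using $H(Y \mid X) = H(Y)$ by independence and $H(Y \mid X, Z) \leq H(Y \mid Z)$ because conditioning reduces entropy; but routing through Lemma~\ref{lem:rv:2} keeps the exposition uniform with the material that precedes it.
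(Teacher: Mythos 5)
Your proof is correct and follows essentially the same route as the paper: both invoke Lemma~\ref{lem:rv:2} with a trivial conditioning variable to upgrade an unconditional mutual information to a conditional one, and then apply the chain rule (the paper merely expands $I(X,Y;Z)$ as $I(Y;Z)+I(X;Z\mid Y)$ instead of $I(X;Z)+I(Y;Z\mid X)$, which is the same argument with the roles of $X$ and $Y$ swapped).
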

\begin{proof}
By Lemma~\ref{lem:rv:2}, we have $I(X;Z) \le I(X;Z | Y)$, and hence, $I(X;Z)+I(Y;Z)\leq I(X;Z|Y) + I(Y;Z) = I(X,Y;Z)$.
\end{proof}

\medskip

We begin the proof of Lemma~\ref{lem:mi} by arguing that it is enough to prove the lemma for the PIN model defined by the complete $t$-uniform hypergraph $K_{m,t}$. Consider any hypergraph $H=(\cV,\cE)$ with $|\cV|=m$, and fix a function $\textbf{L}$ of $X_{\mathcal{M}}^n$. Now construct a new source $\tilde{X}_{\cM}^n$ as follows: first consider the set of all $t$-subsets (i.e., subsets of size $t$) of $\cV$ which do not belong in $\cE$, and call it $\cE^c$. Associate with each such $t$-subset $\tilde{e}\in\cE^c$ $n$ i.i.d.\ Ber(1/2) random variables $\tilde{\xi}_{\tilde{e}}^n$. The random variables $\tilde{\xi}_{\tilde{e}}^n$ are assumed to be independent of each other and independent of those associated with the hyperedges in $\cE$. The new source $\tilde{X}_{\cM}^n$ is defined by $\tilde{X}_i^n=(X_i^n, \{\tilde{\xi}_{\tilde{e}}^n: i\in\tilde{e}, \tilde{e}\in\cE^c\})$, for all $i\in\cM$. Observe that the source $\tilde{X}_{\cM}^n$ corresponds to the PIN model on $K_{m,t}$. Moreover, we clearly have
$$
\sum_{i=1}^m I(\tilde{X}_i^n;\textbf{L})\geq \sum_{i=1}^m I(X_i^n;\textbf{L}).
$$ 
Hence it is enough to show that \eqref{eq:mi} holds for the PIN model on $K_{m,t}$.

For the rest of proof we will consider the hypergraph $K_{m,t}$ only. We will also use $X_{\cM}^n$ to denote the source described on $K_{m,t}$. We also have $I(X_{\cM}^n;\textbf{L})=H(\textbf{L})$ from the fact that $\textbf{L}$ is a function of $X_{\cM}^n$. To complete the proof of Lemma \ref{lem:mi}, we will show that the PIN model on $K_{m,t}$ satisfies
\begin{equation}
\sum_{i=1}^m I((\xi_e^n: i\in e, e\in\cE);\textbf{L})\leq t \, I((\xi_e^n: e\in\cE);\textbf{L}). \label{eq:mi1} 
\end{equation}

For any $i\in\cM$, let $\cE_i$ denote the set of hyperedges containing $i$, so that the left-hand side of \eqref{eq:mi1} can be expressed as $\sum_{i=1}^m I\bigl((\xi_e^n:e\in\cE_{i});\textbf{L}\bigr)$. Now, we write $\cE_i$ as a union of two disjoint sets $\cE_{\geq i}$ and $\cE_{\ngtr i}$, i.e., $\cE_i=\cE_{\geq i}\mathop{\dot{\bigcup}}\cE_{\ngtr i}$. The set $\cE_{\geq i}$ is the subset of $\cE_i$ containing no terminals from $\{1,2,\ldots,i-1\}$. The set $\cE_{\ngtr i}$ is thus the subset of  $\cE_i$ containing at least one terminal from $\{1,2,\ldots,i-1\}$. Observe that we have $|\cE_{\geq i}|=\binom{m-i}{t-1}$ for $1\leq i\leq m-t+1$  and $|\cE_{\geq i}|=0$ for $m-t+2\leq i\leq m$. Therefore,
\begin{align}
\sum_{i=1}^m & I\bigl((\xi_e^n:e\in\cE_{i});\textbf{L}\bigr) \nonumber \\
&= I\left(\left(\xi_e^n:e\in\cE_{>1}\right);\textbf{L}\right) \nonumber \\
& \hspace{1.2em} + \sum_{i=2}^{m-t+1}\biggl[I\left(\left(\xi_e^n:e\in\cE_{\ngtr i}\right);\BL\right)\nonumber \\
& \hspace{5em} +I\left(\left(\xi_e^n:e\in\cE_{\geq i}\right);\BL\Big|\left(\xi_e^n:e\in\cE_{\ngtr i}\right)\right)\biggr] \nonumber \\
& \hspace{1.2em} +\sum_{i=m-t+2}^{m} I\left(\left(\xi_e^n:e\in\cE_{i}\right);\textbf{L}\right) \nonumber \\
& \leq  I\left(\left(\xi_{e}^n:e\in\cE_{>1}\right);\textbf{L}\right) \nonumber \\
& \hspace{1.2em}+\sum_{i=2}^{m-t+1} I\left(\left(\xi_e^n:e\in\cE_{\geq i}\right);\BL\Big|\biggl(\xi_e^n:e\in\bigcup_{j\leq i}\cE_{\ngtr j}\biggr)\right)\nonumber \\
& \hspace{1.2em} +\sum_{i=2}^{m-t+1}I\left(\left(\xi_e^n:e\in\cE_{\ngtr i}\right);\BL\right) \nonumber\\
& \hspace{1.2em}+\sum_{i=m-t+2}^{m}I\left(\left(\xi_e^n:e\in\cE_{i}\right);\textbf{L}\right) \label{lem:mi:1} \\
& = \underbrace{I\left(\left(\xi_e^n:e\in\cE\right);\BL\right)}_{P}+\underbrace{\sum_{i=2}^{m-t+1}I\left(\left(\xi_e^n:e\in\cE_{\ngtr i}\right);\BL\right)}_{Q} \nonumber\\
&\hspace{1.2em}+\underbrace{\sum_{i=m-t+2}^m I\left(\left(\xi_e^n:e\in\cE_{i}\right);\textbf{L}\right)}_{R} \label{lem:mi:2} 
\end{align}
where \eqref{lem:mi:1} follows from Lemma \ref{lem:rv:2}. Note that for $t=2$, \eqref{eq:mi1} follows directly from \eqref{lem:mi:2}:  by virtue of Lemma~\ref{lem:rv:1}, we have $Q + R \le P$, so that the right-hand side (RHS) of \eqref{lem:mi:2} is at most $2P$, as desired.  However, the case of $t>2$ is not as simple and needs further work.

To achieve the RHS of \eqref{eq:mi1}, we require $Q+R \le (t-1)P$. We proceed by defining $Q(i)=I\left(\left(\xi_e^n:e\in\cE_{\ngtr i}\right);\BL\right)$ for all $2\leq i\leq m-t+1$, and thus, $Q=\sum_{i=2}^{m-t+1}Q(i)$. Similarly, define $R(i)=I\left(\left(\xi_e^n: e\in\cE_{i}\right);\textbf{L}\right)$ for all $m-t+2\leq i\leq m$, so that $R=\sum_{i=m-t+2}^m R(i)$. The key ideas are the following: 
\begin{enumerate}
\item Expand each $Q(i)$ using the chain rule into conditional mutual information terms of the form $I(\xi_e^n;\BL|\cdots)$, and further condition them on additional $\xi_{\tilde{e}}^n$s appropriately.
\item Allocate these conditional mutual information terms to appropriate $R(i)$s.
\item Use the chain rule to sum each $R(i)$ and the terms allocated to it to obtain $P$. 
\end{enumerate}
Since the conditional mutual information term $I(\xi_e^n;\BL|\cdots)$ can only increase upon further conditioning on additional $\xi_{\tilde{e}}^n$s (by Lemma \ref{lem:rv:2}), we have $Q+R\leq (t-1)P$ as required.

To proceed, we need to define a total ordering on the set $\cE$. We represent a hyperedge $e$ as a $t$-tuple $(i_1i_2\ldots i_t)$, with the $i_j$s, $1\leq j\leq t$, being the terminals which are contained in $e$, ordered according to $i_1<i_2<\ldots<i_t$. Define a total ordering `$<$' on the set $\cE$, `$<$' being the lexicographic ordering of the $t$-tuples. Also based on the ordering `$<$', we index the hyperedges of $\cE$ as $e_j$, $1\leq j\leq\binom{m}{t}$, satisfying $e_i<e_j$ iff $i<j$. As an example, Table \ref{tab:order} illustrates the indexing of the hyperedges in $K_{5,3}$.

\begin{table}[ht!]
\caption{Indexing of the hyperedges in $K_{5,3}$}
\label{tab:order}
\begin{center} 
\small 
\begin{tabular}{|c|c|} \hline
Hyperedge & Index\\\hline
$(123)$ & 1 \\\hline
$(124)$ & 2 \\\hline
$(125)$ & 3 \\\hline
$(134)$ & 4 \\\hline
$(135)$ & 5 \\\hline
$(145)$ & 6 \\\hline
$(234)$ & 7 \\\hline
$(235)$ & 8 \\\hline
$(245)$ & 9 \\\hline
$(345)$ & 10 \\\hline
\end{tabular}
\end{center}
\end{table}

To proceed further, using the chain rule we expand each $Q(i)$ into a sum of conditional mutual information terms of the form $Q_e\triangleq I(\xi_e^n;\BL|(\xi_{\tilde{e}}^n:\tilde{e}<e,\tilde{e}\in\cE))$ as follows:
\begin{align}
Q(i)&=I((\xi_e^n:e\in\cE_{\ngtr i});\BL) \notag \\
      &=\sum_{e\in\cE_{\ngtr i}} I(\xi_e^n;\BL|(\xi_{\tilde{e}}^n:\tilde{e}<e,\tilde{e}\in\cE_{\ngtr i})) \notag \\
      &\leq\sum_{e\in\cE_{\ngtr i}} I(\xi_e^n;\BL|(\xi_{\tilde{e}}^n:\tilde{e}<e,\tilde{e}\in\cE)) \label{lem:mi:3} \\
      &=\sum_{e\in\cE_{\ngtr i}}Q_e \label{lem:mi:4}
\end{align}
where \eqref{lem:mi:3} follows from Lemma \ref{lem:rv:2}. Hence, we have $Q\leq\sum_{i=2}^{m-t+1}\sum_{e\in\cE_{\ngtr i}}Q_e$. A total of $\sum_{i=2}^{m-t+2}\biggl[\binom{m-1}{t-1}-\binom{m-i}{t-1}\biggr]=(t-1)\binom{m-1}{t}$ $Q_e$ terms are generated. Next, each $R(i)$ is allocated $\binom{m-1}{t}$ terms $Q_{e_j}$, $1 \le j \le \binom{m}{t}$, satisfying $i \notin e_j$. This allocation procedure is explained in detail below and is also formalized in Algorithm~\ref{alg:ta}. We add a further conditioning on each $Q_{e_j}$ allocated to $R(i)$ to make it $Q_{e_{j|i}}\triangleq I(\xi_{e_j}^n;\BL|(\xi_{\tilde{e}}^n:\tilde{e}<e_j,\tilde{e}\in\cE),(\xi_{\tilde{e}}^n:\tilde{e}\in\cE_{i}))$. Lemma~\ref{lem:rv:2} and the definition of $Q_{e_{j|i}}$ ensure that $R(i)+\sum_{j:i\notin e_j}Q_{e_j}\leq R(i)+\sum_{j:i\notin e_j}Q_{e_{j|i}}=P$.

We now give a more detailed description of the allocation procedure. Construct a table $T$ with rows indexed by $i = 2,3, \ldots, m-t+1$ and the columns indexed by $j = 1,2,\ldots,\binom{m}{t}$. This table records the availability (for allocation) of a $Q_{e_j}$ from the expansion of $Q(i)$ in \eqref{lem:mi:4}. Initialize the table as follows: $T(i,j)=1$ if a $Q_{e_j}$ came from $Q(i)$ in \eqref{lem:mi:4}; else $T(i,j)=0$. We carry out the allocation procedure on each $R(i)$ in ascending order of $i$. The procedure of allocation is as follows. The idea is to allocate the necessary $Q_{e_j}$s to $R(i)$ in ascending order of $j$. Once an $i$ and $e_j$ are fixed, we test whether $i\notin e_j$ is satisfied. If not, we increment $j$ by 1. If  $i\notin e_j$ is satisfied, then the availability of $Q_{e_j}$ from $Q(k)$, for all $2\leq k\leq m-t+1$, is checked using the table $T$. The smallest $k$ which satisfies $T(k,j)=1$ is chosen, and $R(i)$ is allocated the $Q_{e_j}$ coming from that $Q(k)$. The table is then updated with $T(k,j)=0$ to record that the $Q_{e_j}$ from that $Q(k)$ is no longer available for allocation. We then increment $j$ by 1 and repeat the allocation procedure. Once all $Q_{e_j}$s with $i\notin e_j$ have been allocated to $R(i)$, we begin the allocation procedure for $R(i+1)$. We formally summarize this allocation procedure in Algorithm~\ref{alg:ta}.
\begin{algorithm}
\caption{}
\label{alg:ta}
\begin{algorithmic}
\State $i=m-t+2,j=1$.
\While{$i\leq m, j\leq\binom{m}{t}$}
\If{$i\notin e_j$}
\State $k=2$.
\While{$k\leq m-t+1$}
\If{$T(k,j)=1$}
\State \hspace{-0.5cm}Choose the $Q_{e_j}$ coming from $Q(k)$ in \eqref{lem:mi:4}.
\State \hspace{-0.5cm}Add the additional conditioning to make it $Q_{e_{j|i}}$.
\State \hspace{-0.5cm}Allocate this term to $R(i).$
\State \hspace{-0.5cm}$T(k,j)\gets 0$.
\State \hspace{-0.5cm}Break.
\EndIf
\If{$T(k,j)=0$ \&\& $k=m-t+1$} 
\State Declare ERROR and halt.
\EndIf
\State $k\gets k+1$.
\EndWhile
\EndIf
\State $j\gets j+1$.
\If{$j=\binom{m}{t}+1$}
\State $i\gets i+1$.
\State $j\gets 1$.
\EndIf
\EndWhile
\end{algorithmic}
\end{algorithm}

The flow of Algorithm \ref{alg:ta} for $K_{5,3}$ is illustrated in Example \ref{ex:53} further below. We now make the following claims:
\begin{claim}
\label{cl:1}
Algorithm \ref{alg:ta} never terminates in ERROR.
\end{claim}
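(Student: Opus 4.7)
The plan is a combinatorial supply-and-demand argument carried out edge-by-edge. For each hyperedge $e_j = \{i_1 < i_2 < \cdots < i_t\}$, I would count how many copies of $Q_{e_j}$ are \emph{produced} by the expansions \eqref{lem:mi:4} (call this the supply $s_j$) and how many are \emph{requested} by the terms $R(i)$ with $i \notin e_j$ (call this the demand $d_j$). A copy of $Q_{e_j}$ appears in the expansion of $Q(k)$ exactly when $e_j \in \cE_{\ngtr k}$ and $k \in \{2,\ldots,m-t+1\}$, i.e., when $k \in \{i_2,\ldots,i_t\} \cap \{2,\ldots,m-t+1\}$, while the demand corresponds to the set $\{i \in \{m-t+2,\ldots,m\}: i \notin e_j\}$.

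The pivotal observation is that since $e_j$ consists of $t$ distinct members of $\{1,\ldots,m\}$ in increasing order, $i_1 \le m-t+1$, and hence $i_1 \notin \{m-t+2,\ldots,m\}$. This forces
\[
|e_j \cap \{m-t+2,\ldots,m\}| = |\{i_2,\ldots,i_t\} \cap \{m-t+2,\ldots,m\}|,
\]
which in turn yields the per-edge balance
\[
s_j = (t-1) - |\{i_2,\ldots,i_t\} \cap \{m-t+2,\ldots,m\}| = d_j.
\]

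Given this equality, I would argue that the greedy allocation never fails as follows. When Algorithm \ref{alg:ta} is about to process a pair $(i,j)$ with $i \notin e_j$, the copies of $Q_{e_j}$ already allocated correspond to those $i' \in \{m-t+2,\ldots,i-1\}$ with $i' \notin e_j$. Appending $i$ itself produces a subset of $\{i' \in \{m-t+2,\ldots,m\}: i' \notin e_j\}$ of size at most $d_j$, so the number of copies consumed so far is at most $d_j - 1$. Consequently at least one $k \in \{2,\ldots,m-t+1\}$ still satisfies $T(k,j) = 1$, the inner \textbf{while} loop finds a valid $k$, and ERROR is never declared.

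I expect the main obstacle to be the bookkeeping itself --- precisely identifying which copy of $Q_{e_j}$ is produced by which $Q(k)$, which row $R(i)$ claims it, and how the greedy rule consumes them in order. The underlying combinatorial content is elementary, hinging only on the fact that the smallest member of any $t$-subset of $\{1,\ldots,m\}$ cannot exceed $m-t+1$, which is exactly what guarantees the per-edge supply-demand balance.
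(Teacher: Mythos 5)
Your proof is correct and follows essentially the same route as the paper's: both establish the per-edge balance that the number of copies of $Q_e$ generated in \eqref{lem:mi:4} equals the number of $R(i)$'s with $i \notin e$ (the paper phrases this as $e$ having $p+1$ members in $\{1,\ldots,m-t+1\}$, hence supply $p$, matching demand $p$), and conclude that the greedy allocation can never run out. Your explicit accounting of which copies have been consumed when the algorithm reaches the pair $(i,j)$ is just a slightly more detailed version of the paper's terser "there are always enough $Q_e$s."
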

\begin{claim}
\label{cl:2}
Algorithm \ref{alg:ta} exhausts all the $Q_e$ terms generated in \eqref{lem:mi:4}.
\end{claim}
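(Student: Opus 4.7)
My plan is to deduce Claim~\ref{cl:2} from Claim~\ref{cl:1} by a conservation-of-count argument: I will show that the total number of $Q_e$ terms generated in \eqref{lem:mi:4} equals the total number of allocation attempts made by Algorithm~\ref{alg:ta}, so that, under Claim~\ref{cl:1}, every generated $Q_e$ must be allocated by the time the algorithm halts.

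For the supply side, the total number of $Q_e$ terms generated across all $i$ is $\sum_{k=2}^{m-t+1}|\cE_{\ngtr k}|$. Using $|\cE_{\ngtr k}|=|\cE_k|-|\cE_{\geq k}|=\binom{m-1}{t-1}-\binom{m-k}{t-1}$, the first summand contributes $(m-t)\binom{m-1}{t-1}$ in total; the second, after the substitution $\ell=m-k$, becomes $\sum_{\ell=t-1}^{m-2}\binom{\ell}{t-1}=\binom{m-1}{t}$ by the hockey-stick identity. A short algebraic simplification using $(m-t)\binom{m-1}{t-1}=t\binom{m-1}{t}$ then yields $(t-1)\binom{m-1}{t}$ as the supply total.

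For the demand side, Algorithm~\ref{alg:ta} iterates $i$ over $\{m-t+2,\ldots,m\}$ and, for each such $i$, attempts exactly one allocation per hyperedge $e_j$ with $i\notin e_j$. Since there are $\binom{m-1}{t}$ such hyperedges for each of the $t-1$ values of $i$, the demand total is also $(t-1)\binom{m-1}{t}$.

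To conclude, by Claim~\ref{cl:1} every attempt succeeds, and each success zeros out a previously $1$-valued entry $T(k,j)$, so distinct attempts correspond to distinct supply terms. With supply equal to demand, no generated $Q_e$ can survive unallocated, establishing Claim~\ref{cl:2}. The only step I expect to require care is the hockey-stick reduction of the supply count (since it is not obvious term-by-term that the multiplicity of each hyperedge $e$ in the expansion is consistent with the overall total); everything else is straightforward bookkeeping conditional on Claim~\ref{cl:1}.
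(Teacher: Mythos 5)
Your proposal is correct and follows essentially the same route as the paper: both arguments establish that the total supply of $Q_e$ terms generated in \eqref{lem:mi:4} is $(t-1)\binom{m-1}{t}$, that the total demand across the $t-1$ terms $R(i)$ is also $(t-1)\binom{m-1}{t}$, and then invoke Claim~\ref{cl:1} to conclude that every generated term is consumed. The only difference is that you carry out the binomial-sum computation explicitly (via the hockey-stick identity), whereas the paper states this count earlier in the appendix and simply cites it.
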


Claim~\ref{cl:1} ensures that each $R(i)$, for all $m-t+2\leq i\leq m$, is allocated all the $Q_{e_j}$s satisfying $i\notin e_j$. Therefore, using Claim~\ref{cl:2}, we have 
\begin{align*}
Q+R \ &= \ \sum_{i=m-t+2}^m \left[R(i)+\sum_{j:i\notin e_j}Q_{e_j}\right] \notag\\
& \ \leq\sum_{i=m-t+2}^m \left[R(i)+\sum_{j:i\notin e_j}Q_{e_{j|i}}\right] \ = \ (t-1)P. \notag
\end{align*}
This completes the proof of Lemma~\ref{lem:mi}, modulo the proofs of Claims \ref{cl:1} and \ref{cl:2}, which we give below.

\begin{IEEEproof}[Proof of Claim \ref{cl:1}]
ERROR is possible only if for some $m-t+2\leq i\leq m$ and for some $e$ satisfying $i\notin e$, all the $Q_e$ terms generated in \eqref{lem:mi:4} have already been allocated. This is impossible as there are always enough $Q_e$s. To see this, suppose $e$ contains $t-1-p$ terminals from $\{m-t+2,\ldots,m\}$, i.e., there are $p$ $R(i)$s requiring an allocation of $Q_e$. Since the hypergraph is $t$-uniform, $e$ must contain $p+1$ terminals from $\{1,2,\ldots,m-t+1\}$. This implies that the total number of $Q_e$s generated in \eqref{lem:mi:4} is $p$. Therefore, we clearly have enough $Q_e$s for all $R(i)$s.
\end{IEEEproof}

\begin{IEEEproof}[Proof of Claim \ref{cl:2}]
As discussed earlier, the total number of $Q_e$ terms generated in \eqref{lem:mi:4} is $(t-1)\binom{m-1}{t}$. Also, the total number of $Q_e$ terms required by each $R(i)$ is $\binom{m-1}{t}$. Therefore, using Claim~\ref{cl:1}, the claim follows.
\end{IEEEproof}

\begin{example}
We illustrate how Algorithm~\ref{alg:ta} proceeds for $K_{5,3}$. Denote the hyperedges in $\cE$ using $3$-tuples, i.e., the hyperedge containing terminals $1$, $2$ and $3$ is $(123)$. The indexing of $\cE$ is illustrated in Table~\ref{tab:order}. So for this case we have $Q(2)=I(\xi_{(123)}^n,\xi_{(124)}^n,\xi_{(125)}^n;\BL)$ and $Q(3)=I(\xi_{(123)}^n,\xi_{(134)}^n,\xi_{(135)}^n,\xi_{(234)}^n,\xi_{(235)}^n;\BL)$. Thus, \eqref{lem:mi:4} takes the form
\begin{align}
Q(2) & \leq I(\xi_{(123)}^n;\BL)+I(\xi_{(124)}^n;\BL|(\xi_e^n:e<(124))\notag\\
        &\hspace{1em}+I(\xi_{(125)}^n;\BL|(\xi_e^n:e<(125)) \label{ex:1}\\
Q(3) & \leq I(\xi_{(123)}^n;\BL)+I(\xi_{(134)}^n;\BL|(\xi_e^n:e<(134))\notag\\
        &\hspace{1em}+I(\xi_{(135)}^n;\BL|(\xi_e^n:e<(135))\notag\\
        &\hspace{1em}+I(\xi_{(234)}^n;\BL|(\xi_e^n:e<(234))\notag\\
        &\hspace{1em}+I(\xi_{(235)}^n;\BL|(\xi_e^n:e<(235)) \label{ex:2}
\end{align}
Observe that $R(4)$ and $R(5)$ require four $Q_e$ terms each, and a total of eight $Q_e$ terms are in fact available from \eqref{ex:1} and \eqref{ex:2}. The table $T$ is initialized as follows:
\begin{center} 
\small 
\begin{tabular}{|c||c|c|c|c|c|c|c|c|c|c|} \hline
& 1 & 2 &3 & 4 & 5 & 6 & 7 & 8 & 9 & 10\\\hline\hline
2& 1&1&1&0&0&0&0&0&0&0\\\hline
3& 1&0&0&1&1&0&1&1&0&0\\\hline
\end{tabular}
\end{center}

We will now illustrate a few of the allocations carried out by Algorithm \ref{alg:ta}. The algorithm begins with $i=4$ and $j=1$ and $Q_{(123)}$ needs to be allocated to $R(4)$. With $k=2$ we see that $T(k,1)=1$, and hence we allocate $Q_{(123)}$ coming from $Q(2)$ to $R(4)$. The table $T$ is then updated as below.
\begin{center} 
\small 
\begin{tabular}{|c||c|c|c|c|c|c|c|c|c|c|} \hline
& 1 & 2 &3 & 4 & 5 & 6 & 7 & 8 & 9 & 10\\\hline \hline
2& 0&1&1&0&0&0&0&0&0&0\\\hline
3& 1&0&0&1&1&0&1&1&0&0\\\hline
\end{tabular}
\end{center}

Next we will illustrate the allocation of $Q_{(123)}$ to $R(5)$, i.e., $i=5$ and $j=1$. The state of the table $T$ just before this step is shown below.
\begin{center} 
\small 
\begin{tabular}{|c||c|c|c|c|c|c|c|c|c|c|} \hline
& 1 & 2 &3 & 4 & 5 & 6 & 7 & 8 & 9 & 10\\\hline \hline
2& 0&1&0&0&0&0&0&0&0&0\\\hline
3& 1&0&0&1&0&0&1&0&0&0\\\hline
\end{tabular}
\end{center}

Setting $k=2$, we see that $T(k,1)=0$. So, we move to $k=3$, for which $T(k,1)=1$. Hence the $Q_{(123)}$ term coming from $Q(3)$ is allocated to $R(5)$, and the table $T$ is updated as below.
\begin{center} 
\small 
\begin{tabular}{|c||c|c|c|c|c|c|c|c|c|c|} \hline
& 1 & 2 &3 & 4 & 5 & 6 & 7 & 8 & 9 & 10\\\hline \hline
2& 0&1&0&0&0&0&0&0&0&0\\\hline
3& 0&0&0&1&0&0&1&0&0&0\\\hline
\end{tabular}
\end{center}

We give one last example of an allocation. Observe that $e=(234)$ is the largest (in terms of the ordering on $\cE$) hyperedge such that $Q_e$ needs to be allocated to $R(5)$. We will now illustrate this step. This happens when $i=5$ and $j=7$. The updated table $T$ just before this step is shown below.

\begin{center} 
\small 
\begin{tabular}{|c||c|c|c|c|c|c|c|c|c|c|} \hline
& 1 & 2 &3 & 4 & 5 & 6 & 7 & 8 & 9 & 10\\\hline \hline
2& 0&0&0&0&0&0&0&0&0&0\\\hline
3& 0&0&0&0&0&0&1&0&0&0\\\hline
\end{tabular}
\end{center}

With $k=2$, we see that $T(k,7)=0$. So set $k=3$, and note that $T(k,7)=1$. So, we allocate to $R(5)$ the $Q_{(234)}$ term contributed by $Q(3)$. Upon updating, the table $T$ now has all entries to be $0$. Observe that at this point no other allocation is required, as the $Q_{e_j}$s for $j=8$, $9$ and $10$ are not required by $R(5)$ since terminal $5$ is contained in each of $e_8$, $e_9$ and $e_{10}$. Thus Algorithm \ref{alg:ta} successfully terminates. Finally, we rewrite \eqref{ex:1} and \eqref{ex:2} with underbraces showing the $R(i)$ term to which each $Q_e$ term was allocated by Algorithm \ref{alg:ta}.

\begin{align}
Q(2) & \leq \underbrace{I(\xi_{(123)}^n;\BL)}_{R(4)}+\underbrace{I(\xi_{(124)}^n;\BL|(\xi_e^n:e<(124))}_{R(5)}\notag\\
        &\hspace{1em}+\underbrace{I(\xi_{(125)}^n;\BL|(\xi_e^n:e<(125))}_{R(4)} \label{ex:53:1}\\
Q(3) & \leq \underbrace{I(\xi_{(123)}^n;\BL)}_{R(5)}+\underbrace{I(\xi_{(134)}^n;\BL|(\xi_e^n:e<(134))}_{R(5)}\notag\\
        &\hspace{1em}+\underbrace{I(\xi_{(135)}^n;\BL|(\xi_e^n:e<(135))}_{R(4)}\notag\\
        &\hspace{1em}+\underbrace{I(\xi_{(234)}^n;\BL|(\xi_e^n:e<(234))}_{R(5)}\notag\\
        &\hspace{1em}+\underbrace{I(\xi_{(235)}^n;\BL|(\xi_e^n:e<(235))}_{R(4)} \label{ex:53:2}
\end{align}

It can be clearly seen from \eqref{ex:53:1} and \eqref{ex:53:2} that $R(i), i=4,5,$ have each been allocated with all $Q_e$s with $i\notin e$, and no $Q_e$ is left unallocated.

\label{ex:53}
\end{example}

\end{document}